\documentclass[a4paper,10pt]{article}
\usepackage{fullpage}
\usepackage[dvipdfmx]{graphicx}
\usepackage[T1]{fontenc}
\usepackage{lineno}
\usepackage{amsmath,amssymb,mathtools,amsthm,amsfonts}
\usepackage{url}
\usepackage{comment}
\usepackage{fancybox,ascmac}
\usepackage[subrefformat=parens]{subcaption}
\usepackage{color}
\usepackage{paralist}
\usepackage{comment}
\usepackage{authblk}
\usepackage{enumerate}
\usepackage{thm-restate}
\usepackage[colorlinks=true,linkcolor=blue,citecolor=blue]{hyperref}
\usepackage{url}
\usepackage{multirow}%
\usepackage{mathrsfs}%
\usepackage{xcolor}%
\usepackage[ruled,vlined,linesnumbered]{algorithm2e}
\usepackage{algpseudocode}
\usepackage{lineno}

\newtheorem{theorem}{Theorem}
\newtheorem{lemma}{Lemma}

\newtheorem{corollary}{Corollary}

\newtheorem{definition}{Definition}

\newtheorem{claim}{Claim}
\long\def\nop#1{}

\begin{document}
\title{The Complexity of Boolean Connectivity Problem of $k$-Horn Formulas}
\author[1]{Takashi Horiyama}
\author[1]{Shoon Mineyoshi}
\author[1]{Yuto Okura}
\author[1]{Kazuhisa Seto}
\author[2]{Junichi Teruyama}
\affil[1]{Hokkaido University\\
\texttt{\{horiyama,seto\}@ist.hokudai.ac.jp}}
\affil[2]{University of Hyogo\\
\texttt{junichi.teruyama@gsis.u-hyogo.ac.jp}}
\date{}

\maketitle

\abstract{The Boolean connectivity problem asks whether the set of satisfying assignments of a given Boolean formula forms a connected subgraph in the $n$-dimensional hypercube. 
This problem is known to be $\mathsf{coNP}$-complete, even when restricted to $k$-Horn formulas for $k \geq 3$, as shown by Makino, Tamaki, and Yamamoto. 
In this paper, we further investigate the computational complexity of {\sc Conn $k$-Horn}, the Boolean connectivity problem for $k$-Horn formulas.
We provide algorithmic and hardness results for {\sc Conn $k$-Horn}.
On the algorithmic side, we first present an exact exponential-time algorithm for arbitrary $k$ without any structural restrictions.
Our algorithm builds on the deterministic PPZ algorithm proposed by Paturi, Pudl\'{a}k, and Zane. 
It runs in $O^*(2^{(1 - 1/2k)n})$ time and polynomial space, achieving an exponential improvement over the previously known algorithm for the Boolean connectivity problem of $k$-CNF formulas, shown by Makino, Tamaki, and Yamamoto.
We next give two polynomial-time algorithms for arbitrary $k$ under the following two restrictions: (i) each variable appears at most twice, and (ii) each clause has length exactly $k$ and each variable appears at most $k$ times.
On the hardness side, we prove that {\sc Conn $3$-Horn} remains $\mathsf{coNP}$-complete even when each variable appears exactly three times.}

\section{Introduction}
The Boolean connectivity problem asks whether the satisfying assignments of a given Boolean formula are connected over the $n$-dimensional hypercube.
The structure and connectivity of satisfying assignments are closely related to the analysis of satisfiability algorithms and satisfiability thresholds. 
The satisfiability problem is characterized by the famous Schaefer dichotomy theorem~\cite{Schaefer78}.
Schaefer's class denotes the class of formulas, e.g., $2$-CNF, (dual) Horn, and affine formulas.
Ekin, Hammer, and Kogan~\cite{Ekin1999Boolean} showed that the Boolean connectivity of DNF is solvable in time linear in the size of DNF, while determining the connectivity of falsifying assignments of DNF is $\mathsf{coNP}$-hard, i.e., the Boolean connectivity of CNF is $\mathsf{coNP}$-hard.
Gopalan, Kolaitis, Maneva, and Papadimitriou~\cite{GopalanKMP09} extensively studied the computational complexity of the Boolean connectivity problem.
The paper~\cite{GopalanKMP09} showed that it is either $\mathsf{coNP}$-complete or $\mathsf{PSPACE}$-complete for non-Schaefer's class, and in $\mathsf{coNP}$ for Schaefer's class.
It has also been conjectured that the problem admits a polynomial-time algorithm for Schaefer’s class.
Makino, Tamaki, and Yamamoto~\cite{MakinoTY10} and Schwerdtfeger~\cite{Schwerdtfeger14} gave a negative answer by proving the $\mathsf{coNP}$-completeness of the Boolean connectivity problem for $k$-Horn and other related formulas.

Although the trichotomy theorem of the Boolean connectivity problem has been fully established, only a few algorithms are known for instances beyond $\mathsf{P}$.
For the connectivity of $k$-CNF formulas ({\sc Conn $k$-CNF}), it is solvable in polynomial time when $k=2$~\cite{GopalanKMP09}.
Makino et al.~\cite{MakinoTY11} presented a moderately exponential-time and exponential-apce algorithm for {\sc Conn $k$-CNF}~($k\ge3$) over $n$ variables and $m$ clauses which is $\mathsf{PSPACE}$-complete.
Their algorithm constructs a solution graph over partial satisfying assignments and subsequently determines its connectivity.
It runs in $O^*(2^{(1-c_k)n})$ time and exponential space, where $c_k = \log{\beta_k}/(1+\log{\beta_k})$ and $\beta_k~(< 2)$ is the largest positive real number that satisfies $x^k - x^{k-1} - \cdots - x - 1 = 0$. 
The $O^*$ notation suppresses polynomial factors in $n$ and $m$.
Throughout this paper, the base of the logarithm is 2.
Note that $c_k = 1/2^{O(k)}$.

In this paper, we deeply investigate the complexity of the Boolean connectivity of $k$-Horn formulas ({\sc Conn $k$-Horn}). 
We first consider an exact algorithm for {\sc Conn $k$-Horn} without any structural restrictions.
The algorithm of Makino et al. can also solve {\sc Conn $k$-Horn} in the same running time since $k$-Horn formulas are subsets of $k$-CNF formulas.
Consequently, it is natural to consider whether a faster algorithm for {\sc Conn $k$-Horn} exists.
Makino et al.~\cite{MakinoTY10} proposed an exact algorithm that runs in polynomial time with respect to the number of variables and the size of the characteristic set of a Horn formula.
Unfortunately, the efficiency of computing the characteristic set of a given Horn formula is not yet well understood.
We give an affirmative answer by utilizing the deterministic PPZ algorithm for $k$-SAT shown by Paturi, Pudl\'{a}k, and Zane~\cite{PaturiPZ99}.
The deterministic PPZ algorithm can find not only a satisfiability assignment but also all locally minimal satisfying assignments.
Moreover, {\sc Conn $k$-Horn} can be solved to find a locally minimal satisfying assignment with Hamming weight at least one~\cite{GopalanKMP09}.
These observations enable us to solve {\sc Conn $k$-Horn} exponentially faster than algorithms for {\sc Conn $k$-CNF}. Moreover, the proposed algorithm requires only polynomial space.
\begin{restatable}{theorem}{ppzalg}\label{thm:main2}
Given a $k$-Horn formula over $n$ variables, there exists a deterministic $O^*(2^{(1-1/2k)n})$ time and polynomial-space algorithm for {\sc Conn $k$-Horn}.
\end{restatable}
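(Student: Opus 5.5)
The plan is to reduce {\sc Conn $k$-Horn} to finding a particular kind of satisfying assignment, and then find such assignments with a derandomized PPZ-style enumeration.

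\textbf{Normalization.} A Horn formula $\varphi$ is closed under bitwise AND, so it has a unique minimum satisfying assignment $m$. We compute $m$ in polynomial time by unit propagation; if $\varphi$ is unsatisfiable we answer trivially. We then fix every variable with $m_i=1$ to $1$ and simplify. The result $\varphi'$ is again a $k$-Horn formula on $n'\le n$ variables, and $\vec 0$ satisfies it.

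\textbf{Reduction.} By the characterization of Gopalan et al.~\cite{GopalanKMP09} for AND-closed relations, every connected component of the solution graph contains a unique locally minimal satisfying assignment. Here $x$ is locally minimal if flipping any single $1$ of $x$ to $0$ falsifies $\varphi'$. Hence $\varphi$ is connected if and only if $\varphi'$ has no locally minimal satisfying assignment of Hamming weight at least one. Checking whether a given assignment is a locally minimal solution takes polynomial time. It therefore suffices to enumerate a candidate set, in polynomial space, that is guaranteed to contain every locally minimal solution.

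\textbf{Critical clauses and PPZ.} Let $x$ be a locally minimal solution of weight $w$. For every $i$ with $x_i=1$ there is a critical clause $(x_i\vee\neg y_1\vee\cdots\vee\neg y_j)$ with $j\le k-1$ and all $y$'s equal to $1$ in $x$. Run PPZ along a permutation $\pi$, with unit propagation on the partial assignment. Whenever all the $y$'s precede $x_i$ in $\pi$, the variable $x_i$ is forced to $1$ and costs no guessed bit. For a uniformly random $\pi$ this happens with probability at least $1/k$. Hence the expected number of guessed $1$'s is at most $(1-1/k)w$, while the $0$-positions may all be guessed.

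\textbf{Encoding.} The guessed bits are written as a string with default value $0$, and we record only the positions of guessed $1$'s. The number of candidate strings is then governed by the number of guessed $1$'s rather than by $n$. I plan to split on the weight:
\begin{itemize}
\item if $w\ge n/2$, the plain PPZ bound gives at most $n-w/k\le(1-1/2k)n$ guessed bits;
\item if $w<n/2$, the sparse encoding gives at most $\sum_{t\le(1-1/k)w}\binom{n}{t}$ strings.
\end{itemize}
One checks that in both regimes the count is $O^*(2^{(1-1/2k)n})$; the second case needs the entropy estimate and may force a slightly different split point.

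\textbf{Derandomization.} Following Paturi, Pudl\'ak and Zane~\cite{PaturiPZ99}, we replace the random permutation by a polynomial-size family of $\ell$-wise independent permutations (or by a small sample space on placements). For at least one $\pi$ in the family, the average number of forced variables is preserved. For each $\pi$ we enumerate all guess strings within the budget, each one in polynomial space. We output ``not connected'' as soon as some run produces a locally minimal solution of weight at least one.

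\textbf{Main obstacle.} I expect the hardest part to be the counting in the low-weight regime together with the derandomization. The deterministic argument must control guessed $1$'s specifically rather than all guessed bits, so I would adapt the PPZ averaging argument to the random variable ``number of unforced $1$-variables''.
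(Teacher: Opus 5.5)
Your skeleton matches the paper's and is sound: normalize so that $\vec 0$ is a solution, use Gopalan et al.\ to reduce to finding a locally minimal solution of weight $\ge 1$, use the Horn critical clause $(x_i\vee\overline{y_1}\vee\cdots\vee\overline{y_j})$ of each $1$-variable, and derandomize with the sample space $S$. Controlling forced $1$'s under derandomization is not an obstacle. For a fixed $x$, the number of forced $1$'s is a sum of $w$ indicators, each of mean at least $1/k-1/n$ over $S$, so some $\pi\in S$ forces at least $w/k-1$ of the $1$'s.

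\textbf{The gap is the low-weight count.} This is exactly where you depart from the paper, and as written it fails. For $w$ just below $n/2$, the sum $\sum_{t\le(1-1/k)w}\binom{n}{t}$ is $2^{H(\frac{k-1}{2k})n-o(n)}$. Since $H(p)>4p(1-p)$ for $0<p<\frac12$, we get $H(\frac{k-1}{2k})>1-\frac{1}{k^2}\ge 1-\frac{1}{2k}$ for every $k\ge 2$. For $k=3$ this is about $0.918$ versus $0.833$. Moving the split point cannot repair this. The high-weight regime needs the split at $w\ge n/2$ to get $n-w/k\le(1-\frac1{2k})n$, and your low-weight count only grows as the split moves up. So the claim that ``one checks'' the bound in both regimes is false for the estimate you wrote.

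\textbf{The fix.} The guessed $1$'s can only sit inside the guess string, and its length is $n-F\le n-F_1$, where $F_1\ge w/k-1$ is the number of forced $1$'s on the good $\pi$. So, for every $\pi\in S$ and every $a\in\{0,\dots,n\}$, enumerate the strings that are $1$ exactly on a set $T\subseteq\{1,\dots,n-a\}$ with $|T|\le (k-1)a+k$. The relevant count is then $\binom{n-w/k}{\le(1-1/k)w}$, not $\binom{n}{\le(1-1/k)w}$.

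Put $r=\frac{(1-1/k)w}{n-w/k}$. The $w$-derivative of $\log_2\binom{n-w/k}{(1-1/k)w}$ is $\frac1k\bigl(\log_2(1-r)+(k-1)\log_2\frac{1-r}{r}\bigr)$, which is positive as long as $(1-r)^k>r^{k-1}$. This holds for all $w\le n/2$. At $w=n/2$ we have $r=\frac{k-1}{2k-1}$ and $(1-r)^k/r^{k-1}=\frac{k^k}{(k-1)^{k-1}(2k-1)}\ge\frac{2k}{2k-1}$; also $r$ increases with $w$ while $(1-r)^k/r^{k-1}$ decreases with $r$. Hence the low-weight phase costs $O^*(2^{(1-\frac1{2k})H(\frac{k-1}{2k-1})n})$, which is below the target. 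With this correction your proof goes through.

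\textbf{Comparison with the paper.} The paper does not encode sparsely. It brute-forces all assignments of weight $\le\epsilon n$ with $\epsilon<1/2$, then runs DetPPZ on strings of $n(1-\epsilon/k)+1$ bits, and takes the exponent from the PPZ citation. Read literally, that costs $2^{\max(H(\epsilon),\,1-\epsilon/k)n}$ up to polynomial factors, which exceeds $2^{(1-1/2k)n}$ for every fixed $\epsilon<1/2$. So your corrected sparse count is actually the sharper way to reach the stated exponent.
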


We next consider {\sc Conn $k$-Horn} with a bounded number of occurrences of each variable.
We present the following algorithmic and hardness results.
\begin{itemize}
\item A polynomial-time algorithm for {\sc Conn $k$-Horn} when each variable appears at most twice ({\sc Conn $k$-Horn-2)}.
The algorithm is based on the relationship between variables that appear only as negative literals and locally minimal satisfying assignments.
Any variable that appears only as negative literals is assigned the value $0$ in any locally minimal satisfying assignment.
We prove that determining the connectivity of the solution graph for the remaining formula after assigning 0 to such variables is solvable in polynomial time.
\item A polynomial-time algorithm for {\sc Conn $k$-Horn} when each clause have exactly $k$ literals and each variable appears at most $k$ times ({\sc Conn E$k$-Horn-$k$)}.
The algorithm is based on the relationship between the solution graph of a Horn formula and a non-empty maximal self-implicating set, as shown in~\cite{Schwerdtfeger14}.
Any variable in a self-implicating set is forced to be true when all the other variables in the set are true.
The solution graph is disconnected if and only if there exists a non-empty maximal self-implicating set and no clauses with only negative literals of these variables.
We show that such a set can be found in polynomial time.
\item The $\mathsf{coNP}$-completeness of {\sc Conn 3-Horn} even when each variable appears exactly three times ({\sc Conn $3$-Horn-E$3$}).
We first prove the complement of {\sc Conn $3$-Horn} remains $\mathsf{NP}$-complete even if each variable appears at most three times ({\sc Conn $3$-Horn-3}) by a polynomial-time reduction from {\sc Monotone Not-All-Equal 3-SAT}, where each clause has exactly three literals and each variable appears exactly four times ({\sc Monotone NAE E3-SAT-E4}). This problem is known to be $\mathsf{NP}$-complete~\cite{Darmann20}.
We then reduce the {\sc Conn $3$-Horn-3} to the {\sc Conn $3$-Horn-E3}.
\end{itemize}

\paragraph*{Related Work}
The connectivity problem between two satisfying assignments (\textsc{st-Conn}) has also been extensively studied.
Gopalan et al.~\cite{GopalanKMP09} showed that \textsc{st-Conn} is solvable in polynomial time for Schaefer's and a part of non-Schaefer's class and is $\mathsf{PSPACE}$-complete otherwise.
Scharpfenecker~\cite{Scharpfenecker2015Structure} proved that the complexity of SAT is equivalent to that of \textsc{st-Conn} in Schaefer's class.
This implies that \textsc{st-Conn} is $\mathsf{P}$-complete for Horn formulas and is $\mathsf{NL}$-complete for 2-CNFs.
Cardinal, Demaine, Eppstein, Hearn, and Winslow~\cite{Cardinal2018NAE} showed that \textsc{st-Conn} of \textsc{Planar Monotone Not-All-Equal 3-SAT} is $\mathsf{PSPACE}$-complete.  

The problem of finding the shortest path in the solution graph of Boolean formulas has also been investigated.
Mouawad, Nishimura, Pathak, and Raman~\cite{Mouawad15} studied the computational complexity of this problem and proved a trichotomy theorem showing that the problem lies in $\mathsf{P}$, is $\mathsf{NP}$-complete, or is $\mathsf{PSPACE}$-complete.
The paper~\cite{Mouawad15} also showed that there exist classes of Boolean formulas for which the shortest path can be found in polynomial time, even though its length is not equal to the symmetric difference between the values of the variables in $s$ and $t$.
Bonsma, Mouawad, Nishimura, and Raman~\cite{BonsmaMNR14} showed the problem is W[1]-hard when parameterized by the path length $\ell$ from $s$ to $t$.

The complexity of the isomorphism of two solution graphs was studied in~\cite{Scharpfenecker2016Isomorphism}.
This problem is $\mathsf{PSPACE}$-hard and lies in $\mathsf{EXP}$ for general formulas, and is $\mathsf{C_{=}P}$-complete for $2$-CNF. The class $\mathsf{C_{=}P}$ is a subclass of $\mathsf{PSPACE}$ and is defined in terms of exact counting.

\section{Preliminaries}
Let $x$ be a Boolean variable that takes true (1) or false (0).
A \emph{literal} is a Boolean variable $x$ (positive literal) or its negation $\bar{x}$ (negative literal).
A \emph{clause} is a disjunction of literals.
The \emph{length} of clause $C$ is defined as the number of literals in $C$.
A \emph{unit clause} is a clause of length one. 
A conjunctive normal form (CNF) formula is a conjunction of clauses, and a $k$-CNF formula is a CNF formula if the length of each clause is at most $k$.
A CNF formula is \emph{monotone} if all clauses in the formula consist of only positive (or negative) literals.
In this paper, we use Monotone CNF as CNF formulas with all positive literals unless otherwise stated.
A \emph{Horn clause} is a clause including at most one positive literal.
A Horn formula is a CNF formula whose all clauses are Horn, and a $k$-Horn formula is a $k$-CNF formula and a Horn formula.
Let $X = \{x_1, x_2, ..., x_n\}$ be a set of Boolean variables. 
An \emph{assignment} of a Boolean formula $\varphi$ over $X$ is $\alpha=(\alpha_1, \alpha_2, \ldots, \alpha_n) \in\{0,1\}^n$ such that $x_i=\alpha_i$ for all~$i$.
A \emph{partial assignment} of a Boolean formula $\varphi$ over $X$ is $\alpha=(\alpha_1, \alpha_2, \ldots, \alpha_n) \in\{0,1, *\}^n$ such that $x_i=\alpha_i$ for all~$i$. 
A partial assignment with $\alpha_i = *$ means variable $x_i$ is not assigned a 0/1 value.
For a variable $x$ and an assignment $\alpha$, we denote by $\alpha(x)$ the value of $x$ under an assignment $\alpha$.
We can simplify a Boolean formula $\varphi$ by a partial assignment $\alpha$ in a natural way, and the resulting formula is denoted by $\varphi|_\alpha$.
An (partial) assignment $\alpha$ is a \emph{(partial)~satisfying assignment} of a Boolean formula $\varphi$ if $\varphi|_\alpha$ is true.
The satisfiability problem is to determine whether there exists a satisfying assignment of a given Boolean formula $\varphi$.

Many excellent exponential-time algorithms for the satisfiability problem of $k$-CNF formulas ($k$-CNF SAT) are known.
One of the famous algorithms is shown by Paturi, Pudl\'{a}k, and Zane~\cite{PaturiPZ99}.
\begin{theorem}[\cite{PaturiPZ99}]
\label{theo:paturi}
    There exists an algorithm that can solve {\sc $k$-CNF SAT} over $n$ variables in a deterministic $O^*(2^{(1-1/2k)n})$ time and polynomial space.
\end{theorem}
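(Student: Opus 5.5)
The statement immediately above is Theorem~\ref{theo:paturi}, which the paper cites from~\cite{PaturiPZ99}; if I had to reprove it, I would derandomize the PPZ ``forcing'' argument. Fix a satisfying assignment $\alpha$ of a $k$-CNF $\varphi$. Say $\alpha$ is \emph{$j$-isolated} if exactly $j$ of its $n$ Hamming neighbours falsify $\varphi$. For each isolated coordinate $x_i$, flipping $x_i$ falsifies some clause. That clause is satisfied by $\alpha$ only through the literal on $x_i$, so it is a \emph{critical clause} for $x_i$ with at most $k-1$ other literals, all false under $\alpha$.

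Process the variables in an order $\pi$, setting each variable to the value forced by a unit clause when one exists, and otherwise to the value of $\alpha$. If $x_i$ comes after the other variables of its critical clause, then $x_i$ is forced, because that clause has become a unit clause. Under a uniformly random $\pi$ this happens with probability at least $1/k$, so the expected number of forced variables is at least $j/k$.

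\textbf{Derandomizing the order.} I would replace uniform permutations by a polynomial-size family $\Pi$ of orders that is approximately $k$-wise independent. One way is to assign each variable a label in $[n^{c}]$ from a $k$-wise independent hash family and sort by label. Any fixed set of $k$ variables is then ordered almost uniformly, so each critical variable is last in its clause with probability at least $1/k-o(1)$. Averaging gives some $\pi\in\Pi$ with at least $j/k-o(n)$ forced variables. For each $\pi\in\Pi$ and each budget, the algorithm enumerates, in depth-first search, all sequences of choices for the \emph{unforced} variables of length at most $n-f$, and tests each resulting assignment. This costs $|\Pi|\cdot 2^{n-f}\cdot\mathrm{poly}(n)$ time and only polynomial space.

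\textbf{Handling poorly isolated solutions (main obstacle).} The step I expect to be hardest is dealing with instances where every solution is poorly isolated, since the forcing argument alone then gives almost nothing. I would split into two cases at isolation $n/2$.
\begin{itemize}
\item If some solution is at least $n/2$-isolated, the derandomized forcing above yields $f\ge n/(2k)-o(n)$, hence time $2^{(1-1/2k)n}$ up to the polynomial and $2^{o(n)}$ slack. The slack has to be absorbed carefully, for instance by taking $\Pi$ with exact pairwise/$k$-wise guarantees.
\item Otherwise every solution has more than $n/2$ satisfying neighbours. The edge-isoperimetric inequality on the hypercube then forces the solution set $S$ to have size at least $2^{n/2}$, and more usefully to be ``spread'' along many coordinates.
\end{itemize}
In the second case I would first try enumerating assignments to a suitable block of variables to restrict to a subcube in which some solution becomes highly isolated relative to the subcube dimension. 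I would then run the forcing search inside that subcube, balancing the enumeration cost against the forcing gain so that both cases meet at exponent $1-1/2k$. Getting this balance, and choosing the restricted coordinates deterministically without paying an extra $\binom{n}{d}$ factor, is where I expect the real work to lie.
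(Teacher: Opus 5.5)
Your forcing argument and the derandomized order are sound, and they are the same ingredients the paper describes for {\sc DetPPZ}. With the sample space of Lemma~\ref{lem:sample_space} the loss is at most $j/n\le 1$ forced variable, not $o(n)$, so there is no slack to absorb.

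The genuine gap is your second case, which contains no algorithm. Edge-isoperimetry does give $|S|\ge 2^{n/2}$ when every solution has more than $n/2$ satisfying neighbours. But a large solution set does not by itself give a deterministic search of the required cost. Your plan to restrict to a subcube ``in which some solution becomes highly isolated'' never says which coordinates to fix, why isolation would appear, or what the restriction costs. You concede that this is where the real work lies. As written, the proof only covers formulas that happen to have an $n/2$-isolated solution.

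The missing idea is to split not on the isolation of an arbitrary solution, but on the Hamming weight of a \emph{locally minimal} one. If $\varphi$ is satisfiable, it has a solution $\alpha$ in which no $1$ can be flipped to $0$ (for example, one of minimum weight). Flipping any $1$-coordinate of $\alpha$ falsifies some clause whose only true literal is that variable. So $\alpha$ is at least $w$-isolated, where $w$ is its weight. This gives two cases, for a threshold $0<\epsilon<1/2$:
\begin{itemize}
\item Some solution has weight at most $\epsilon n$. Then exhaustive search over the at most $2^{H(\epsilon)n}$ assignments of weight at most $\epsilon n$ finds it, where $H$ is the binary entropy function.
\item Otherwise $\alpha$ has at least $\epsilon n$ critical variables, and your derandomized forcing reaches it within $|\Pi|\cdot 2^{n(1-\epsilon/k)+1}$ branches.
\end{itemize}
This is exactly the two-phase structure of {\sc DetPPZ} as the paper describes it, and no isoperimetry is needed.

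Note also that the cut cannot sit at $n/2$ as in your split. The complementary low-weight search would then cost $\sum_{i<n/2}\binom{n}{i}\approx 2^{n-1}$. That is why the paper's description takes $0<\epsilon<1/2$ and trades $2^{H(\epsilon)n}$ against $2^{(1-\epsilon/k)n}$.
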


For an assignment $\alpha\in \{0,1\}^n$, the \emph{Hamming weight} of 
$\alpha$ is $|\{i: \alpha_i=1~(1\leq i\leq  n)\}|$.
For two assignments $\alpha, \alpha' \in \{0,1\}^n$, the \emph{Hamming distance} between $\alpha$ and $\alpha'$, denoted by $d(\alpha, \alpha')$, is $|\{i: \alpha_i \neq \alpha'_i~(1\leq i\leq  n)\}|$.
We define the \emph{solution graph} of a Boolean formula $\varphi$ as $G_\varphi \coloneqq (V_\varphi, E_\varphi)$, where $V_\varphi$ is a set of all satisfying assignments of $\varphi$, that is, $V_\varphi = \{\alpha: \varphi|_{\alpha} = 1, \alpha \in \{0,1\}^n\}$, 
and $E_\varphi = \{(\alpha,\alpha'): d(\alpha,\alpha')=1 \text{~and~} \alpha, \alpha' \in V_{\varphi}\}$.
\begin{definition}
The Boolean connectivity problem {\sc(\sc Conn)} is to ask whether the solution graph $G_\varphi$ of a given Boolean formula $\varphi$ over $n$ variables is connected.
\end{definition}
Makino, Tamaki, and Yamamoto~\cite{MakinoTY10} showed that {\sc Conn $3$-Horn} is $\mathsf{coNP}$-complete, thus the following lemma holds.
\begin{lemma}[\cite{MakinoTY10}]
{\sc Conn $k$-Horn} is $\mathsf{coNP}$-complete when $k\geq 3$.
\end{lemma}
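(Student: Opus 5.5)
For membership in $\mathsf{coNP}$ and hardness, I would argue the two directions separately. I would also reduce the case of general $k$ to $k=3$, since every $3$-Horn formula is a $k$-Horn formula for every $k\ge 3$. Membership is independent of $k$.

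\emph{Membership in $\mathsf{coNP}$.} I would give a short witness of disconnectedness and use the structure of Horn formulas. Since $V_\varphi$ is closed under bitwise AND, $\varphi$ has a unique minimal model $\alpha_{\min}$, computable in polynomial time by unit propagation. Call a satisfying assignment \emph{locally minimal} if flipping any single $1$ to $0$ makes it falsifying. The key claim, following Gopalan et al., is that every connected component of $G_\varphi$ contains exactly one locally minimal satisfying assignment. I would prove it as follows.
\begin{enumerate}
\item Existence: descend inside the component by flipping $1$s to $0$ while staying satisfying; the Hamming weight strictly decreases, so this stops at a locally minimal element of the component.
\item Uniqueness: if $\beta,\gamma$ are in the same component, then $\beta\wedge\gamma$ is reachable from both by monotone-decreasing paths. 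The tool is that a path $\beta=\delta_0,\dots,\delta_t=\gamma$ yields the path $\delta_i\wedge\beta$, after removing repetitions, from $\beta$ down to $\beta\wedge\gamma$. Each step of that path either repeats or flips a $1$ to $0$, and all its vertices are satisfying by AND-closure. Hence a locally minimal $\beta$ must equal $\beta\wedge\gamma$, and symmetrically for $\gamma$.
\end{enumerate}
Consequently $G_\varphi$ is disconnected iff there is a locally minimal satisfying assignment $\beta\neq\alpha_{\min}$ (or, degenerately, we treat an unsatisfiable $\varphi$ as connected). Such a $\beta$ is a polynomial-time checkable certificate: check that $\varphi|_\beta=1$, test all $n$ single downward flips, and compare $\beta$ with $\alpha_{\min}$.

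\emph{Hardness for $k=3$.} I would reduce an $\mathsf{NP}$-complete problem to the complement, namely the existence of a nontrivial locally minimal model. The cleanest source is {\sc Monotone NAE 3-SAT}, as the paper later does for the bounded-occurrence version. The idea is a ``self-implicating'' gadget, in the terminology of Schwerdtfeger:
\begin{itemize}
\item introduce a variable $z$ and, for each original variable $x_i$, two variables $p_i,q_i$ meant to represent $x_i=1$ and $x_i=0$;
\item add $3$-Horn clauses under which, when $z=1$, every set of $1$s must be closed under implications that make $z$ re-derivable;
\item add purely negative clauses forbidding $p_i=q_i=1$ and forbidding all-equal choices inside each NAE clause.
\end{itemize}
The all-zero assignment is always the minimal model, so disconnectedness should be equivalent to the existence of a nonempty closed set of $1$s that is consistent with the negative clauses. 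That set in turn should encode a NAE-satisfying assignment. Long implications are split into $3$-Horn clauses via auxiliary chain variables, and one must check that these variables do not create spurious locally minimal models.

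The main obstacle is this gadget design. Horn clauses cannot directly force a positive disjunction, so the choice ``$p_i$ or $q_i$'' and the clause satisfaction must be enforced indirectly. The enforcement has to come through local minimality, via implications back to $z$, rather than through satisfiability itself. Since the lemma is attributed to Makino, Tamaki, and Yamamoto~\cite{MakinoTY10}, I would ultimately rely on their construction for $k=3$ and only verify the monotonicity in $k$ and the $\mathsf{coNP}$ membership above.
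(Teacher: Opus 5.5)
Your overall route matches the paper's. The paper gives no proof of this lemma; it cites \cite{MakinoTY10} for $k=3$, lifts to $k\ge 3$ because every $3$-Horn formula is $k$-Horn, and takes $\mathsf{coNP}$ membership from the locally-minimal characterization of Gopalan et al.\ (Lemma~\ref{lem:local_minimal_gopalan}). Your gadget sketch is not a proof, since it leaves open exactly how Horn clauses force the choice between $p_i$ and $q_i$. Hardness therefore rests on the citation, as in the paper; the reduction in Section~\ref{sec:hardness} would also give the $k=3$ case.

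There is a false step in your uniqueness argument. The walk $\delta_i\wedge\beta$ need not be monotone: if the original path flips a coordinate $j$ with $\beta_j=1$ down and later back up, the AND-ed walk does the same. The claim you draw from it, that $\beta\wedge\gamma$ is reachable from $\beta$ by a monotone decreasing path, is false for Horn formulas. Take $\varphi=(\overline{b}\vee c)(\overline{c}\vee d)(\overline{a}\vee\overline{d}\vee b)$ over $(a,b,c,d)$. Its solutions are $1111,0111,0011,0001,0000,1000$, and these form a path in the cube. So $\beta=1111$ and $\gamma=1000$ lie in one component with $\beta\wedge\gamma=\gamma$. Yet the only satisfying downward neighbor of $1111$ is $0111\not\ge 1000$, so no monotone decreasing path reaches $\gamma$ from $\beta$.

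The conclusion you need survives with a one-line repair. Every vertex $\delta_i\wedge\beta$ of the walk is satisfying and lies below $\beta$, and consecutive vertices differ in at most one coordinate. Hence the first vertex of the walk that differs from $\beta$ is a satisfying neighbor of $\beta$ obtained by flipping a $1$ to $0$. If $\beta$ is locally minimal, no such vertex exists, so the walk is constant and $\beta\wedge\gamma=\beta$, i.e.\ $\beta\le\gamma$. Symmetrically $\gamma\le\beta$. With this fix the rest of your membership argument is correct: existence by descent, the equivalence ``disconnected iff some locally minimal $\beta\neq\alpha_{\min}$'', and the polynomial-time certificate check.
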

Given two assignments $\alpha, \alpha' \in \{0,1\}^n$, we denote $\alpha \le \alpha'$ as the coordinate-wise partial order if $\alpha_i \le \alpha'_i$ for each $1 \le i \le n$. 
A satisfying assignment $\alpha$ for a given formula $\varphi$ is \emph{locally minimal} if $\alpha$ has no neighboring satisfying assignment $\alpha'$ with $d(\alpha, \alpha') = 1$ and $\alpha' \le \alpha$. 
For $v, v' \in V_\varphi$, 
a \emph{monotone path} from $v$ to $v'$ is a path in $G_\varphi$, $v \rightarrow u_1 \rightarrow \cdots \rightarrow u_r \rightarrow v'$ such that $v\leq u_1 \leq \cdots \leq u_r \leq v'$. 
Gopalan, Kolaitis, Maneva, and Papadimitriou~\cite{GopalanKMP09} showed a notable connection between locally minimal satisfying assignments and the connectivity of Horn formulas.
\begin{lemma}[\cite{GopalanKMP09}]
\label{lem:local_minimal_gopalan}
Given a Horn formula $\varphi$, every connected component of $G_\varphi$ has a unique locally minimal satisfying assignment. Moreover, there exists a monotone path from the locally minimal satisfying assignment to each satisfying assignment in the same connected component.
\end{lemma}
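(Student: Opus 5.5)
The plan is to use the one structural fact about Horn formulas that matters here: the set of satisfying assignments is closed under coordinate-wise conjunction. If $\alpha$ and $\beta$ satisfy $\varphi$, then so does $\alpha\wedge\beta$. To see this, take a Horn clause with at most one positive literal $y$. If some negative literal $\bar x$ is true under $\alpha$ or under $\beta$, then $x$ is $0$ in $\alpha\wedge\beta$ and the clause is satisfied. Otherwise all negated variables are $1$ in both assignments, so $y$ is $1$ in both, hence in $\alpha\wedge\beta$. (If the clause has no positive literal, the second case cannot occur.) Throughout I call a path $\alpha=a_0,a_1,\dots,a_r$ in $G_\varphi$ with $a_0\ge a_1\ge\cdots\ge a_r$ a \emph{descent} from $\alpha$ to $a_r$. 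A descent reversed is exactly a monotone path in the sense defined above.

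\textbf{Existence.} Start from any satisfying $\alpha$ in a component $C$. Repeatedly flip some $1$ to $0$ while the result stays satisfying. Each step lowers the Hamming weight, so this stops after at most $n$ steps, at a locally minimal $\mu\in C$. This gives a descent from $\alpha$ to $\mu$, and reversing it gives a monotone path from $\mu$ to $\alpha$. So every component contains at least one locally minimal assignment, and every assignment can be reached monotonically from some locally minimal one.

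\textbf{Key step (transfer along edges).} Suppose $\alpha,\beta$ are adjacent in $G_\varphi$ and there is a descent from $\alpha$ to a locally minimal $\mu$. I claim there is a descent from $\beta$ to $\mu$.
\begin{itemize}
\item If $\beta>\alpha$, prepend the edge $\beta\to\alpha$ to the descent.
\item If $\beta<\alpha$, so that $\alpha=\beta+e_i$, let $\alpha=a_0\ge\cdots\ge a_r=\mu$ be the descent and set $b_j=a_j\wedge\beta$. Each $b_j$ is satisfying by closure under $\wedge$. The sequence is non-increasing, and consecutive terms differ in at most one coordinate; after deleting repeats it is a descent from $b_0=\beta$ to $b_r=\mu\wedge\beta$. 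Now $\mu\wedge\beta$ differs from $\mu$ in at most coordinate $i$. If it differed, it would be a satisfying neighbour of $\mu$ lying below $\mu$, contradicting local minimality. Hence $\mu\wedge\beta=\mu$, and $\beta$ descends to $\mu$.
\end{itemize}

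\textbf{Uniqueness.} Let $\mu,\mu'$ be locally minimal in the same component, joined by a path $\mu=\alpha_0,\dots,\alpha_t=\mu'$. Apply the key step inductively along this path, starting from the trivial descent at $\mu$. This yields a descent from $\mu'$ to $\mu$. Since $\mu'$ is locally minimal, no descent step can leave it, so the descent is trivial and $\mu=\mu'$. Combined with the existence part, this gives the unique locally minimal assignment of each component and a monotone path from it to every member of the component.

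I expect the only delicate point to be the case $\beta<\alpha$ of the key step. That is where closure under conjunction and local minimality of $\mu$ must be combined. The remaining steps are routine bookkeeping.
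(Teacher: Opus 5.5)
The paper does not prove this lemma; it is quoted from Gopalan et al. Your proof is correct and self-contained. Closure under $\wedge$ is verified properly, and the greedy descent gives existence. In the case $\beta<\alpha$, the identity $\mu\wedge\beta=\mu$ follows as you say: $\mu\le\alpha$ makes $\mu\wedge\beta$ equal to $\mu$ with coordinate $i$ cleared, and local minimality rules out that coordinate being $1$.

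Two comparisons may be useful.

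\textbf{Shortening the uniqueness step.} Your edge-by-edge transport can be compressed. If $\mu=\alpha_0,\dots,\alpha_t=\mu'$ is a path between locally minimal assignments, then $\mu\wedge\alpha_0,\dots,\mu\wedge\alpha_t$ are satisfying assignments below $\mu$. The sequence starts at $\mu$, and consecutive terms differ in at most one coordinate. Local minimality of $\mu$ therefore keeps it at $\mu$, so $\mu\le\mu'$, and by symmetry $\mu=\mu'$. Your key step is this mechanism applied incrementally. Its extra output, a descent from every vertex to $\mu$, already follows from existence plus uniqueness.

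\textbf{The more general OR-free setting.} As far as I recall, Gopalan et al.\ establish the fact for the broader class of OR-free constraints. Horn is a special case: $\wedge$-closure survives fixing coordinates, and $\{01,10,11\}$ is not $\wedge$-closed. In that class the solution set need not be closed under $\wedge$; already $x\oplus y$ is OR-free but not $\wedge$-closed. So the argument there must be local. Take a peak $w$ of a path, with path-neighbours $w-e_a$ and $w-e_b$, $a\neq b$ (the case $a=b$ allows a shortcut). OR-freeness of each constraint forces $w-e_a-e_b$ to be satisfying. Hence a path of minimum total weight has no peaks, so it is a valley. This yields uniqueness and the monotone paths at once. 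Your $\wedge$-based route is more elementary and fully sufficient for Horn formulas; the peak-cutting route covers the larger class.
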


The following corollary follows from that any Horn formula without unit clauses over $n$ variables has an all-zero satisfying assignment ($0^n$) and Lemma~\ref{lem:local_minimal_gopalan}.
\begin{corollary}[\cite{MakinoTY10}]
\label{col:local_minimal_makino}
Given a Horn formula $\varphi$ without unit clauses, the solution graph $G_\varphi$ is connected if and only if no locally minimal non-zero satisfying assignment exists.
\end{corollary}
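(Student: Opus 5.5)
The corollary follows by combining Lemma~\ref{lem:local_minimal_gopalan} with the observation that the all-zero assignment is always a solution. I will first verify that observation, then prove both directions of the equivalence by counting locally minimal satisfying assignments per component.

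First, $0^n \in V_\varphi$. Since $\varphi$ is Horn and has no unit clauses, every clause has length at least two. It therefore contains at most one positive literal and so at least one negative literal $\bar{x}_i$, which is satisfied when $x_i=0$. (I treat the degenerate empty clause as excluded, as is implicit in the statement; otherwise $\varphi$ is unsatisfiable.) Moreover, $0^n$ is trivially locally minimal, because no assignment $\alpha' \le 0^n$ with $d(0^n,\alpha')=1$ exists. Hence the connected component $K_0$ of $G_\varphi$ containing $0^n$ has $0^n$ as its locally minimal satisfying assignment, and by the uniqueness part of Lemma~\ref{lem:local_minimal_gopalan} it is the only one in $K_0$.

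($\Rightarrow$) Suppose $G_\varphi$ is connected, so $V_\varphi = K_0$. Any locally minimal satisfying assignment lies in $K_0$, and by uniqueness it equals $0^n$. Hence no non-zero locally minimal satisfying assignment exists.

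($\Leftarrow$) Suppose $G_\varphi$ is disconnected. Then some component $K \neq K_0$ exists, and by Lemma~\ref{lem:local_minimal_gopalan} it contains a locally minimal satisfying assignment $\alpha$. Since $0^n \in K_0$ and $K \cap K_0 = \emptyset$, we have $\alpha \ne 0^n$, so $\alpha$ is a non-zero locally minimal satisfying assignment. Taking the contrapositive gives the claimed equivalence.

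No step here is a real obstacle. The only point needing care is the uniqueness argument placing $0^n$ as the sole locally minimal element of its own component, and that is supplied directly by Lemma~\ref{lem:local_minimal_gopalan}.
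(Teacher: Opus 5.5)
Your proof is correct and takes the same route as the paper: it derives the corollary directly from the fact that $0^n$ satisfies every Horn formula without unit clauses, together with the uniqueness of the locally minimal satisfying assignment in each component (Lemma~\ref{lem:local_minimal_gopalan}). You additionally write out both directions and the trivial local minimality of $0^n$, which the paper leaves implicit.
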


Let $P(C)$ (resp. $N(C)$) denote the set of variables that appear in positive (resp. negative) literals in a clause $C$ of $\varphi$.
Makino et al.~\cite{MakinoTY10} introduced the following Boolean formula $\Phi_\varphi$.
\begin{align}\label{formula:makino}
\Phi_\varphi &= \varphi \wedge \bigwedge^n_{i=1} D_i,\\
D_i &= \overline{x_i} \vee \bigvee_{C \in \varphi : P(C) = \{x_i \} } \bigwedge_{y \in N(C)} y.\nonumber
\end{align}
For example, given a Horn formula
\[ \varphi=(x_1\vee \overline{x_2} \vee \overline{x_3})(x_1 \vee \overline{x_3}\vee \overline{x_4})(\overline{x_2} \vee \overline{x_3}\vee \overline{x_4})(\overline{x_2}\vee x_3\vee \overline{x_4})(\overline{x_1}\vee \overline{x_3}\vee x_4),\]
then
\[
D_1 = \overline{x_1}\vee x_2x_3 \vee x_3x_4, ~D_2 = \overline{x_2},~D_3 = \overline{x_3}\vee x_2x_4,~D_4 = \overline{x_4}\vee x_1x_3.
\]
Note that each $D_i$ is either a unit clause with a negative literal or a conjunction of one negative literal and disjunctions of positive literals.
The satisfiability of $\Phi_\varphi$ is strongly related to the connectivity of the solution graph of a Horn formula $\varphi$
because a satisfying assignment of $\Phi_\varphi$ is a locally minimal satisfying assignment of $\varphi$.
\begin{lemma}[\cite{MakinoTY10}]
\label{lem:makino_formula}
        Given a Horn formula $\varphi$ without unit clauses over $n$ variables, the solution graph $G_\varphi$ is disconnected if and only if $\Phi_\varphi$ has a non-zero satisfying assignment.
\end{lemma}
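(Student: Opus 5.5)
We must show: for a Horn formula $\varphi$ without unit clauses, $G_\varphi$ is disconnected iff $\Phi_\varphi$ has a non-zero satisfying assignment. The plan is to reduce this to Corollary~\ref{col:local_minimal_makino} by proving the following characterization: an assignment $\alpha$ satisfies $\Phi_\varphi$ if and only if $\alpha$ is a locally minimal satisfying assignment of $\varphi$. Since $\varphi$ has no unit clauses, $0^n$ satisfies $\varphi$, and the corollary says $G_\varphi$ is disconnected iff some non-zero locally minimal satisfying assignment exists. The lemma then follows immediately.

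To prove the characterization, we first fix a satisfying assignment $\alpha$ of $\varphi$ and a variable $x_i$ with $\alpha(x_i)=1$. Let $\alpha^{(i)}$ denote $\alpha$ with $x_i$ flipped to $0$. We claim that $\alpha^{(i)}$ fails to satisfy $\varphi$ iff some clause $C$ with $P(C)=\{x_i\}$ has all $y\in N(C)$ true under $\alpha$. For the forward direction, lowering $x_i$ can only falsify clauses that contain the positive literal $x_i$. By the Horn property, such a clause has $P(C)=\{x_i\}$, and it is falsified exactly when all its negative literals are false, i.e.\ all $y\in N(C)$ equal $1$. Note that $x_i\notin N(C)$ may be assumed, since a clause containing both $x_i$ and $\overline{x_i}$ is tautological and can be discarded. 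Conversely, if such a clause $C$ exists, then $\alpha^{(i)}$ falsifies it. Clauses not containing $x_i$ positively stay satisfied, because decreasing a variable only helps negative literals.

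Therefore $\alpha$ satisfies $D_i$ iff either $\alpha(x_i)=0$, or $\alpha(x_i)=1$ and the down-neighbor $\alpha^{(i)}$ is not a solution. Ranging over all $i$, $\alpha$ satisfies $\bigwedge_i D_i$ together with $\varphi$ iff $\alpha$ is a satisfying assignment with no satisfying neighbor below it. This is precisely local minimality, which proves the characterization.

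The only delicate point is handling degenerate clauses, namely tautologies and clauses whose positive literal is $x_i$ with empty $N(C)$, i.e.\ unit clauses. The latter are excluded by hypothesis. This hypothesis is also what makes $0^n$ a solution and thus lets Corollary~\ref{col:local_minimal_makino} apply.
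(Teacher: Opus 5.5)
Your proof is correct and takes essentially the paper's route: it shows that the satisfying assignments of $\Phi_\varphi$ are exactly the locally minimal satisfying assignments of $\varphi$, since $D_i$ encodes ``flipping $x_i$ from $1$ to $0$ falsifies some clause with $P(C)=\{x_i\}$'', and then applies Corollary~\ref{col:local_minimal_makino}. One phrasing should be tightened: excluding tautological clauses is a necessary standing convention, not a harmless discard, because dropping such a clause changes $D_i$, and for $\varphi=(x_1\vee\overline{x_1})$ the stated equivalence actually fails ($G_\varphi$ is connected, yet $x_1=1$ satisfies $\Phi_\varphi$).
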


A \emph{restraint clause} is a clause with only negative literals, and a \emph{restraint set} is a set of variables included in restraint clauses.
An \emph{implication clause} is a clause that has one positive literal and one or more negative literals.
A Boolean variable $x$ is said to be \emph{implied} by a set of variables $U$ if setting all variables in $U$ to true forces $x$ to be true.
A \emph{self-implicating set} $U$ is a set of variables such that every $x \in U$ is implied by $U\setminus\{x\}$. 
We also say that $U$ is a \emph{maximal self-implicating set} if $U$ is a set of all variables implied by $U$. 
For maximal self-implicating sets, there exists a relation between locally minimal satisfying assignments and maximal self-implicating sets.

\begin{lemma}[\cite{Schwerdtfeger14}]\label{lem:sch2}
    For every Horn formula $\varphi$ without positive unit clauses, there is a bijection correlating each connected component $\varphi_i$ with a maximal self-implicating set $U_i$ containing no restraint set, i.e., $U_i$ consists of the variables assigned true in the minimum assignment of $\varphi_i$ (the 'lowest' component is correlated with the empty set).
\end{lemma}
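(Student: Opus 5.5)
We would build the correspondence directly from Lemma~\ref{lem:local_minimal_gopalan}. Throughout, we read ``$x$ is implied by $U$'' as membership in the forward-chaining closure $\mathrm{cl}(U)$. This closure is obtained by starting from $U$ and repeatedly adding $y$ whenever some clause $C$ has $P(C)=\{y\}$ and $N(C)\subseteq$ (current set). Setting $U$ to true forces exactly the variables of $\mathrm{cl}(U)$ to be true. ``$U$ contains no restraint set'' we read as: no restraint clause $C$ has $N(C)\subseteq U$.

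\emph{From components to sets.} Let $\varphi_i$ be a component of $G_\varphi$ and $\alpha$ its unique locally minimal satisfying assignment. Since there is a monotone path from $\alpha$ to every member of $\varphi_i$, $\alpha$ is in fact the coordinatewise minimum of $\varphi_i$. Put $U_i=\{x:\alpha(x)=1\}$. We check three properties.
\begin{itemize}
\item \emph{No restraint set.} Because $\alpha$ satisfies every restraint clause, no restraint clause has all its variables in $U_i$.
\item \emph{Maximality.} If a clause $C$ with $P(C)=\{y\}$ has $N(C)\subseteq U_i$, then $\alpha$ satisfies $C$ only if $\alpha(y)=1$. Hence $\mathrm{cl}(U_i)=U_i$.
\item \emph{Self-implication.} Take $x\in U_i$. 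Local minimality says that flipping $x$ to $0$ falsifies some clause. Lowering a variable can only falsify clauses containing it positively, so that clause $C$ has $P(C)=\{x\}$ and $N(C)\subseteq U_i\setminus\{x\}$. Thus $x\in\mathrm{cl}(U_i\setminus\{x\})$.
\end{itemize}
The lowest component contains $0^n$, which is satisfying because there is no positive unit clause, so it maps to $\emptyset$.

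\emph{From sets to components.} Conversely, let $U$ be a maximal self-implicating set containing no restraint set, and let $\alpha$ be its indicator vector. First we show $\alpha$ satisfies $\varphi$. Consider a clause $C$.
\begin{itemize}
\item If $C$ has a positive literal $y$ and $N(C)\subseteq U$, then $y\in\mathrm{cl}(U)=U$, so $C$ is satisfied.
\item If $C$ has a positive literal and $N(C)\not\subseteq U$, some negative literal of $C$ is true.
\item If $C$ is a restraint clause, including a negative unit clause, it is satisfied by the hypothesis on $U$.
\item Positive unit clauses do not occur.
\end{itemize}
Next we show $\alpha$ is locally minimal. Fix $x\in U$ and run forward chaining from $U\setminus\{x\}$. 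Since $\mathrm{cl}(U\setminus\{x\})\subseteq\mathrm{cl}(U)=U$, every variable derived before $x$ first appears lies in $U\setminus\{x\}$. So the step deriving $x$ uses a clause $C$ with $P(C)=\{x\}$ and $N(C)\subseteq U\setminus\{x\}$. That clause is falsified when $x$ is flipped to $0$. Hence $\alpha$ is the locally minimal element, and therefore the minimum, of its component.

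The two maps are mutually inverse. A component is determined by its unique locally minimal assignment, and a set $U$ is determined by its indicator vector. This gives the bijection. The main obstacle is the self-implication step in this converse direction. Implication may go through a chain of clauses, so we must extract a single clause whose falsification witnesses local minimality; the ``first derivation of $x$'' argument above, together with $\mathrm{cl}(U)=U$, is what I would rely on there.
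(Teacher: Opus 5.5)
Your proof is correct and complete. The paper does not prove this lemma itself: it is quoted from Schwerdtfeger, so there is no in-paper argument to compare yours against. Your reduction to Lemma~\ref{lem:local_minimal_gopalan} is a natural self-contained route. It rests on two facts: the minimum of a component is its unique locally minimal solution, and a $0/1$ vector is a locally minimal solution exactly when its support $U$ is closed, self-implicating, and contains the variable set of no restraint clause. Both directions, and the check that the two maps are mutually inverse, go through as you wrote them.

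Two small remarks.

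First, the paper defines ``implied'' semantically (setting $U$ to true \emph{forces} $x$), while you use the forward-chaining closure. The two readings differ only vacuously: namely when $\mathrm{cl}(W)$ contains the variable set of a restraint clause, so that no satisfying assignment lies above the indicator of $W$. Suppose $U$ contains no restraint set. Then semantic closedness of $U$ already forces $\mathrm{cl}(U)=U$. Otherwise $\mathrm{cl}(U)$ would contain a restraint set, every variable would be vacuously implied, and $U$ would be the whole variable set, contradicting the hypothesis on $U$. Then, for each $W\subseteq U$, the indicator of $\mathrm{cl}(W)\subseteq U$ is itself a satisfying assignment, hence the least one above $W$. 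So the two readings coincide on everything the lemma quantifies over, and one sentence saying so would make your choice of reading airtight.

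Second, the step you flag as the main obstacle is immediate. If the indicator of $U\setminus\{x\}$ satisfied $\varphi$, then $U\setminus\{x\}$ would be closed under forward chaining, i.e.\ $\mathrm{cl}(U\setminus\{x\})=U\setminus\{x\}\not\ni x$. That contradicts self-implication, so no ``first derivation'' bookkeeping is needed.
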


The following corollary follows from Lemma~\ref{lem:sch2} and Corollary~\ref{col:local_minimal_makino}.

\begin{corollary}[\cite{Schwerdtfeger14}]\label{col:schwerd}
    The solution graph of a Horn formula $\varphi$ without positive unit clauses is disconnected if and only if $\varphi$ has a non-empty maximal self-implicating set containing no restraint set.
\end{corollary}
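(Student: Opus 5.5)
The plan is to derive the statement directly from Lemma~\ref{lem:sch2}, by counting connected components through the bijection it provides. First, I will check that the hypotheses make the solution graph non-empty and identify its lowest component. Since $\varphi$ has no positive unit clauses, every clause contains at least one negative literal. Hence the all-zero assignment $0^n$ satisfies $\varphi$, so $G_\varphi$ has at least one connected component. The component containing $0^n$ has $0^n$ as its locally minimal (indeed minimum) assignment. Under the bijection of Lemma~\ref{lem:sch2} it therefore corresponds to the empty set.

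I will also confirm directly that $\emptyset$ is a legitimate element of the target family. It is trivially self-implicating. It is maximal because the set of variables implied by $\emptyset$ is empty: a variable forced to be true with nothing set to true would require a positive unit clause, or a chain of implications starting from one, and neither exists. It contains no restraint set because every restraint clause has at least one variable, so no restraint clause lies entirely inside $\emptyset$. I read ``containing no restraint set'' in exactly this sense: no restraint clause has all of its variables inside $U$.

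With these facts, both directions are short. Suppose $G_\varphi$ is disconnected. Then it has at least two components, and Lemma~\ref{lem:sch2} assigns them two distinct maximal self-implicating sets containing no restraint set. At most one of these is empty, so a non-empty one exists. Conversely, suppose a non-empty maximal self-implicating set $U$ containing no restraint set exists. The bijection gives a component corresponding to $U$, and it differs from the component of $0^n$, which corresponds to $\emptyset \neq U$. So $G_\varphi$ has at least two components and is disconnected.

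As a consistency check, one can instead go through Corollary~\ref{col:local_minimal_makino}. Setting the variables of $U$ to $1$ and all others to $0$ gives a non-zero assignment. It satisfies $\varphi$: implication clauses with all negative variables in $U$ have their head in $U$ by maximality, and restraint clauses are satisfied because $U$ contains no restraint set. It is locally minimal because flipping any $x\in U$ to $0$ violates an implication clause, since $x$ is implied by $U\setminus\{x\}$. Corollary~\ref{col:local_minimal_makino}, however, requires no unit clauses at all, whereas the statement only forbids positive unit clauses, so I will rely on the Lemma~\ref{lem:sch2} argument. The only step that needs real care is that $\emptyset$ is counted by the bijection and maps to the component of $0^n$; this uses the absence of positive unit clauses in an essential way.
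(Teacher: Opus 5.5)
Your proof is correct and is essentially the paper's argument. The paper only says the result follows from Lemma~\ref{lem:sch2} together with Corollary~\ref{col:local_minimal_makino}. You instead use the bijection of Lemma~\ref{lem:sch2} alone, count components, and pin the component of $0^n$ to $\emptyset$. That minor deviation is a slight improvement, because Corollary~\ref{col:local_minimal_makino} is stated for formulas with no unit clauses at all, while your route matches the statement's weaker hypothesis (no positive unit clauses) exactly.
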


\section{PPZ-based Algorithm for {\sc\bfseries Conn $k$-Horn}}

We present an exact algorithm for {\sc Conn $k$-Horn} based on the deterministic PPZ algorithm~\cite{PaturiPZ99}.
Before presenting our new algorithm, we describe the notations and behaviors of the deterministic PPZ algorithm, referred to as {\sc DetPPZ}.
Let $\varphi$ be a $k$-CNF formula with $n$ variables and $m$ clauses, and let $\alpha$ be a satisfying assignment.
A clause $C_{(\alpha,i)}$ is a \emph{critical clause} for the variable $x_i$ at the solution $\alpha$ if $C_{(\alpha, i)}$ is to be false by flipping $\alpha_i$.
A variable $x_i$ is a \emph{critical variable} if there exists a critical clause $C_{(\alpha, i)}$.
The {\sc DetPPZ} first checks whether there exists a satisfying assignment of $\varphi$ to check all assignments with Hamming weight at most $\epsilon n$ where $0<\epsilon < 1/2$.
If there is no such satisfying assignment, the {\sc DetPPZ} tries to find a locally minimal satisfying assignment $\alpha$ with Hamming weight at least $\epsilon n$.
Note that $\varphi$ is falsified by any assignment obtained by flipping any 1 in $\alpha$ to 0.
Thus, $\alpha$ must have at least $\epsilon n$ critical variables.
Let $\sigma=(\sigma_1, \ldots, \sigma_n)$ be a permutation of $\{1,\ldots,n\}$.
The {\sc DetPPZ} picks up a permutation $\sigma$ in a `good' small sample space and branches by assigning a 0/1 value to $x_{\sigma_i}$ according to the order in $\sigma$.
If any critical variable appears last among the variables in the corresponding critical clause, then it can be assigned correctly.
It helps to reduce the number of branches in the algorithm. 
Modifying the {\sc DetPPZ} such that it does not halt even if it finds a satisfying assignment with Hamming weight at most $\epsilon n$, the {\sc DetPPZ} can find all locally minimal satisfying assignments. 
From Corollary~\ref{col:local_minimal_makino}, {\sc Conn $k$-Horn} can be solved by checking whether there exists some locally minimal satisfying assignment except for $0^n$ assignment. Thus, by combining the (modified) {\sc DetPPZ} and the procedure checking locally minimality, we solve {\sc Conn $k$-Horn} in the same running time as the {\sc DetPPZ}. 
We describe our algorithm, {\sc PPZ-Conn-$k$Horn}, in Algorithm~\ref{algo:conn_khorn2}.

\begin{algorithm}[t]
 \caption{{\sc PPZ-Conn-$k$Horn$(\varphi)$}}
 \label{algo:conn_khorn2}
 Construct the set of permutations $S$ in Lemma~\ref{lem:sample_space}
 
 \For{all assignments $\alpha$ with at most $\epsilon n$ 1's without all 0's}{
    \If{$\alpha$ satisfies $\varphi$ and $\alpha$ is locally minimal}{\Return NO}
 }
 \For{all permutations $\sigma$ in $S$}{
    \For{all strings $s$ of $n(1-\epsilon/k)+1$ bits}{
        \For{$i=1$ to $n$}{
        Let $x_j$ be the $i$-th variable according to $\sigma$~(i.e., $j=\sigma_i$).\;
        \If{there exists a unit clause $x_j$ or $\overline{x_j}$}{set $x_j$ to make that clause true}
        \ElseIf{there exists an unused bit from $s$}{set $x_j$ equal to the next unused bit from $s$}
        \Else{go to the next string in the for loop of lines 6--17. \\
        \Comment{there are no unused bits from $s$}}
        }
        \If{$\varphi$ is satisfiable and the satisfying assignment is locally minimal}{\Return NO}
    }
 }
\Return YES
\end{algorithm}

The following `good' sample space of permutations using $k$-wise independent variables is useful.
\begin{lemma}[\cite{PaturiPZ99}]\label{lem:sample_space}
One can construct the set of permutations $S$ of size $O(n^{3k})$ that satisfies the following condition: for any set $Y$ of up to $k$ variables, any variable $y\in Y$, and for a randomly chosen permutation from $S$, the probability that $y$ appears last among the variables in $Y$ is at least $1/|Y|-1/n$.
\end{lemma}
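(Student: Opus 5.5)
The plan is to build $S$ by derandomizing the standard "random labels" construction of a uniformly random permutation with a $k$-wise independent family of labels. Fix a prime $q$ with $n^3 \le q \le 2n^3$, which exists by Bertrand's postulate and can be found in polynomial time. For each polynomial $p$ over $\mathbb{F}_q$ of degree less than $k$, define labels $\ell_p(x_i) = p(i)$ for $i=1,\ldots,n$. These evaluation points are distinct as long as $q > n$, which holds for $n \ge 2$. Let $\sigma_p$ be the permutation that sorts the variables by increasing label, breaking ties by index. Put $S = \{\sigma_p\}$. Then $|S| \le q^k \le (2n^3)^k = O(n^{3k})$ for fixed $k$, and $S$ can be listed in time polynomial in its size.

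The key fact is the classical one: if $p$ is uniform among the polynomials of degree $<k$, then the values $p(1),\ldots,p(n)$ are $k$-wise independent and each is uniform on $\mathbb{F}_q$. This holds because interpolation through any $k$ distinct points is a bijection between coefficient vectors and value vectors. Hence, for any set $Y$ with $|Y| \le k$, the labels of the variables in $Y$ are fully independent and uniform.

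Now fix $Y$ and $y\in Y$, and let $T$ be the event that two variables of $Y$ receive the same label. By a union bound and pairwise independence, $\Pr[T] \le \binom{|Y|}{2}/q \le k^2/(2n^3) \le 1/n$, assuming $k \le n$; if $k > n$, sets $Y$ have at most $n$ elements, so the same bound applies with $|Y|$ in place of $k$. Conditioned on $\overline{T}$, the labels of $Y$ are distinct i.i.d.\ uniform values. By symmetry each element of $Y$ is equally likely to carry the largest label, so $\Pr[y \text{ last} \mid \overline{T}] = 1/|Y|$. Therefore
\[
\Pr[y \text{ last among } Y] \ \ge\ \Pr[\overline{T}]\cdot \frac{1}{|Y|} \ \ge\ \frac{1}{|Y|} - \Pr[T] \ \ge\ \frac{1}{|Y|} - \frac{1}{n}.
\]

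The main obstacle is the symmetry step: it needs exchangeability of the labels of $Y$, not merely independence. Here it is immediate because the labels are i.i.d.\ uniform, and the deterministic tie-breaking only matters on the event $T$, which we simply discard. The remaining care is choosing $q$ large enough ($\Theta(n^3)$) that $\Pr[T] \le 1/n$, while keeping $q^k = O(n^{3k})$.
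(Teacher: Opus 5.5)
Your proof is correct. It is essentially the standard construction from \cite{PaturiPZ99}, which the paper cites without reproducing: sort the variables by $k$-wise independent labels drawn from a range of size about $n^3$, so that a label collision inside $Y$ costs at most $1/n$, and otherwise symmetry gives $1/|Y|$. One detail to make explicit: $S$ must be the multiset $(\sigma_p)_p$, so that a random element of $S$ means $\sigma_p$ for a uniform $p$. If you remove duplicates (for example, all $q$ constant polynomials give the identity), your probability calculation no longer describes the uniform distribution on $S$.
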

The following lemma immediately follows from the proof of correctness of the deterministic PPZ algorithm shown in \cite{PaturiPZ99}.
We provide almost the same proof for self-containedness in this paper.
\begin{lemma}\label{lem:completeness}
Let $S$ be the set of permutations constructed by Lemma~\ref{lem:sample_space}. 
Let $\alpha$ be an arbitrary satisfying assignment with at least $\epsilon n$ critical clauses.
There exists a permutation in $S$ that produces $\alpha$.
\end{lemma}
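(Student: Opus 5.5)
The plan is to follow the standard PPZ argument. Fix $\alpha$ and let $I$ be the set of indices $i$ for which $x_i$ is critical at $\alpha$; for each such $i$ fix one critical clause $C_{(\alpha,i)}$. The hypothesis (at least $\epsilon n$ critical clauses, one per critical variable) gives $|I|\ge \epsilon n$. A critical clause $C_{(\alpha,i)}$ is satisfied by $\alpha$ only through the literal on $x_i$, so every other literal in it is false under $\alpha$.

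Now run the inner loop of Algorithm~\ref{algo:conn_khorn2} with a permutation $\sigma$ and with the bit string $s$ chosen so that every "free" variable gets its value from $\alpha$. Suppose $x_i$ comes last in $\sigma$ among the variables of $C_{(\alpha,i)}$. When $x_i$ is processed, all other variables of that clause have already been set according to $\alpha$, so their literals are false. The simplified formula then contains the unit clause consisting of the literal of $x_i$, and the algorithm sets $x_i=\alpha_i$ without consuming a bit of $s$. Any variable not forced in this way is set by the next bit of $s$, which we choose as its $\alpha$-value. Forced unit clauses arising from other clauses are consistent with $\alpha$, because $\alpha$ satisfies $\varphi$. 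So, provided the number of consumed bits is at most the length of $s$, the run produces exactly $\alpha$.

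It therefore suffices to find $\sigma\in S$ under which at least $|I|/k - o(n)$ critical variables come last in their critical clauses. Then the number of consumed bits is at most $n-\epsilon n/k + o(n)$. Let $Z$ count such indices for a uniformly random $\sigma\in S$. Each critical clause has at most $k$ variables, so Lemma~\ref{lem:sample_space} gives $\Pr[x_i \text{ last in } C_{(\alpha,i)}]\ge 1/|C_{(\alpha,i)}|-1/n\ge 1/k-1/n$. By linearity of expectation, $\mathbb{E}[Z]\ge |I|(1/k-1/n)\ge \epsilon n/k-1$. Hence some $\sigma\in S$ attains $Z\ge \lceil \epsilon n/k -1\rceil$. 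For this $\sigma$ the number of consumed bits is at most $n-\epsilon n/k+1$, which matches the length $n(1-\epsilon/k)+1$ of $s$ in the algorithm.

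The main obstacle is the bookkeeping around the $-1/n$ loss and rounding. I would handle it by arguing with the integer $Z$ directly: $Z\ge \epsilon n/k-1$ gives at most $n-Z\le n(1-\epsilon/k)+1$ consumed bits, so the $+1$ in the string length absorbs the loss. A second point to check is that processing unit clauses never sets a variable contrary to $\alpha$ when earlier variables agree with $\alpha$. This holds because a unit clause produced by restriction to an $\alpha$-consistent partial assignment must be satisfied by $\alpha$. A variable forced earlier than expected also just saves a bit, so it does no harm.
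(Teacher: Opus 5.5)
Your proposal is correct and follows essentially the same argument as the paper. Both apply Lemma~\ref{lem:sample_space} to each critical clause, use linearity of expectation to get an expected $\epsilon n/k - O(1)$ critical variables appearing last in their clauses, pick a $\sigma \in S$ achieving at least the expectation, and conclude that at most $n(1-\epsilon/k)+1$ bits of $s$ are consumed. Your version is a bit more careful in two places the paper leaves implicit: you fix one critical clause per critical variable, so that each ``last'' event saves a distinct bit, and you check that unit propagation along an $\alpha$-consistent prefix never contradicts $\alpha$.
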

\begin{proof} 
Let $\alpha$ be a satisfying assignment with at least $\epsilon n$ critical clauses.
We now choose a permutation $\sigma$ from $S$ at random.
For each critical clause at $\alpha$, the probability that the critical variable $x$ appears last in $\sigma$ among the variables in the clause is at least $1/k-1/n$ by Lemma~\ref{lem:sample_space}. 
Since the number of critical clauses is at least $\epsilon n$, the expected number of times this event appears is at least $\epsilon n(1/k-1/n)=\epsilon n/k -\epsilon$, and there exists some $\sigma$ in $S$ that achieves at least the expectation. 
With respect to such $\sigma$, we assign at most $n-(\epsilon n/k -\epsilon)<n(1-\epsilon/k)+1$ variables to satisfy $\varphi$.  
Thus, by checking all strings of $n(1-\epsilon/k)+1$ bits, the permutation $\sigma$ can produce a satisfying assignment $\alpha$.
\end{proof}

We next show that the locally minimality of satisfying assignments can be efficiently checked. 
\begin{lemma}\label{lem:minimal_check}
Let $\alpha$ be a satisfying assignment of $\varphi$ with Hamming weight $w$.
One can check whether $\alpha$ is locally minimal in $O(mw+n)$ time.
\end{lemma}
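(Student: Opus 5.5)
By definition, $\alpha$ is locally minimal exactly when no assignment obtained by flipping a single $1$ of $\alpha$ to $0$ satisfies $\varphi$. Since $\alpha$ has only $w$ ones, there are $w$ candidate neighbors $\alpha^{(i)}$, one for each $i$ with $\alpha_i=1$. The plan is to decide in $O(m)$ time per candidate whether $\alpha^{(i)}$ is still satisfying, plus $O(n)$ preprocessing. That gives $O(mw+n)$ in total.

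The preprocessing runs in $O(n)$ time. We scan $\alpha$ once and collect the list $I=\{i : \alpha_i=1\}$, so that we can iterate over the $w$ relevant coordinates without rescanning all $n$ positions for each one. We also store $\alpha$ in an array, so $\alpha(x)$ can be looked up in constant time.

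For each $i\in I$, we do one pass over the $m$ clauses. The key observation is that only the literal on $x_i$ changes value when $\alpha_i$ is flipped, and $\alpha$ already satisfies every clause. So $\alpha^{(i)}$ falsifies a clause $C$ exactly when $C$ is a critical clause for $x_i$ at $\alpha$: the positive literal $x_i$ occurs in $C$, and every other literal of $C$ is false under $\alpha$. For $k$-Horn formulas each clause has at most $k=O(1)$ literals, so this test costs $O(1)$ per clause and $O(m)$ per index $i$. For general clause lengths the same bound holds if $m$ is read as the formula size.

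We then report that $\alpha$ is locally minimal iff, for every $i\in I$, some clause is critical for $x_i$ in this sense. If for some $i$ no clause becomes false, then $\alpha^{(i)}\le\alpha$ is a satisfying neighbor at distance $1$, and $\alpha$ is not locally minimal. Conversely, if every $i\in I$ has a critical clause, no downward neighbor satisfies $\varphi$. Flips of $0$'s to $1$'s give assignments above $\alpha$ and are irrelevant by definition. The total time is $O(n)+w\cdot O(m)=O(mw+n)$.

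There is no real obstacle here. The only point needing care is the accounting: the $O(n)$ term must be paid once and not per index, which is why we precompute $I$. We also rely on clause length being constant, or on measuring $m$ as total formula size, to treat each clause check as $O(1)$.
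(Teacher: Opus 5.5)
Your proposal is correct and is essentially the paper's argument: flip each of the $w$ ones of $\alpha$, test the resulting assignment against the $m$ clauses, and pay the $O(n)$ term only once. You make the paper's accounting more explicit by precomputing the index set and by noting that each clause check is $O(1)$ for constant $k$. One small nit: your shortcut characterization of a falsified clause assumes no clause contains both $x_i$ and $\overline{x_i}$, since a clause such as $(x_i\vee\overline{x_i})$ stays satisfied after the flip. This is harmless if you simply evaluate each clause under $\alpha^{(i)}$ directly.
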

\begin{proof}
From the definition of locally minimality, if any satisfying assignment $\alpha$ of $\varphi$ is locally minimal, then any assignment $\alpha'$ obtained by flipping any 1's bit is not a satisfying assignment of $\varphi$. 
It can be done by checking whether $\alpha'$ satisfies $\varphi$.
Since the Hamming weight of $\alpha$ is $w$, we can check the locally minimality of $\alpha$ in $O(mw+n)$ time.
\end{proof}

We present the main lemma, which establishes the correctness of our algorithm.
\begin{lemma}\label{lem:find_local_minimal}
The deterministic PPZ algorithm can find all locally minimal satisfying assignments of $\varphi$.
\end{lemma}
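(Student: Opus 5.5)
We prove Lemma~\ref{lem:find_local_minimal} by splitting the locally minimal satisfying assignments $\alpha$ of $\varphi$ according to their Hamming weight $w$, and showing that each one is reached by one of the two phases of Algorithm~\ref{algo:conn_khorn2}.

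\emph{Case $w \le \epsilon n$.} The first loop enumerates every assignment with at most $\epsilon n$ ones and tests it directly. It checks satisfiability, and it checks local minimality via Lemma~\ref{lem:minimal_check}. Here we use the modified {\sc DetPPZ}, which does not halt early, so every such $\alpha$ is encountered.

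\emph{Case $w > \epsilon n$.} The key observation is that local minimality forces many critical clauses. Fix a coordinate $i$ with $\alpha_i = 1$. Flipping $\alpha_i$ to $0$ gives an assignment $\alpha' \le \alpha$ at distance $1$. By local minimality, $\alpha'$ is not satisfying, so some clause $C$ is falsified by $\alpha'$ but satisfied by $\alpha$. Since only $x_i$ changed, $C$ is satisfied under $\alpha$ solely through the literal $x_i$. Hence $C$ is a critical clause $C_{(\alpha,i)}$, and $x_i$ is a critical variable.

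Distinct variables $x_i$ give distinct critical clauses. A clause that is critical for $x_i$ has exactly one true literal under $\alpha$, namely the one on $x_i$, so it cannot also be critical for a different variable. Therefore $\alpha$ has at least $w \ge \epsilon n$ critical clauses, one for each $1$-coordinate.

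Lemma~\ref{lem:completeness} now applies: some permutation $\sigma \in S$ together with some bit string $s$ of length $n(1-\epsilon/k)+1$ produces $\alpha$ in the inner loop.

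I would briefly justify that the unit-clause forcing step never assigns a critical variable contrary to $\alpha$. When $x_{\sigma_i}$ is processed and is last in its critical clause, every other literal of that clause has already been set according to $\alpha$, hence to false. The clause has therefore become the unit clause $x_{\sigma_i}$, which forces the correct value $1$. Every other variable receives its value from $s$, which we choose to agree with $\alpha$. The algorithm then checks local minimality, so $\alpha$ is reported.

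The main obstacle is only bookkeeping. Lemma~\ref{lem:completeness} is phrased in terms of critical clauses, so we must ensure the critical-clause count refers to distinct clauses tied to distinct variables; the uniqueness-of-true-literal argument above settles this. We must also ensure that forcing by unit clauses created from other clauses never contradicts $\alpha$. It cannot: all assignments made so far agree with $\alpha$, and $\alpha$ satisfies $\varphi$, so any unit clause that arises is satisfied by $\alpha$'s value.

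Combining the two cases, every locally minimal satisfying assignment is enumerated, and each is recognized by the local-minimality check.
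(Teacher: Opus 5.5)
Your proof is correct and follows the paper's route. You split at Hamming weight $\epsilon n$, handle light assignments by the exhaustive first phase, and handle heavy ones via Lemma~\ref{lem:completeness}, using that local minimality forces at least $\epsilon n$ critical clauses. Your extra justifications (a distinct critical clause for each $1$-coordinate, and unit propagation never contradicting $\alpha$) supply details the paper leaves implicit.
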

\begin{proof}
The deterministic PPZ algorithm first checks all assignments with Hamming weight at most $\epsilon n$. 
Hence, any satisfying assignment $\alpha$ with Hamming weight at most $\epsilon n$ can be found.
We can check whether $\alpha$ is locally minimal from Lemma~\ref{lem:minimal_check}.

We next consider that any locally minimal satisfying assignment $\alpha$ has Hamming weight at least $\epsilon n$.
Since $\alpha$ is a locally minimal satisfying assignment and has at least $\epsilon n$ 1's, there exist at least $\epsilon n$ critical clauses at $\alpha$.
From Lemma~\ref{lem:completeness}, $\alpha$ must be produced by some permutation in the sample space $S$ constructed in the algorithm.
The algorithm checks all permutations in $S$, then it can find $\alpha$.
We can check whether $\alpha$ is truly locally minimal from Lemma~\ref{lem:minimal_check}.
\end{proof}

We are now ready to prove Theorem~\ref{thm:main2}.
\ppzalg*
\begin{proof}
We can solve {\sc Conn $k$-Horn} by checking whether there exists some locally minimal non-zero satisfying assignment from Corollary~\ref{col:local_minimal_makino}.
Lemma~\ref{lem:find_local_minimal} shows that the deterministic PPZ algorithm can find all locally minimal satisfying assignments.
Thus, {\sc PPZ-Conn-$k$Horn} in Algorithm~\ref{algo:conn_khorn2} solves {\sc Conn $k$-Horn} correctly.
The deterministic PPZ algorithm runs in $O^*(2^{(1-1/2k)n})$ time and polynomial space. 
Checking locally minimality can be done $O(mn)$ time for each satisfying assignment from Lemma~\ref{lem:minimal_check}.
Thus, our algorithm runs in $O^*(2^{(1-1/2k)n})$ time and polynomial space.
\end{proof}

\begin{algorithm}[t]
 \caption{{\sc Count-CC-$k$Horn$(\varphi)$}}
 \label{algo:conn_khorn3}
 count $\leftarrow$ 0
 
 Construct the set of permutations $S$ in Lemma~\ref{lem:sample_space}
 
 \For{all assignments $\alpha$ with at most $\epsilon n$ 1's}{
    \If{$\alpha$ satisfies $\varphi$, and $\alpha$ is locally minimal and has never been produced}{count $\leftarrow$ count+1}
 }
 \For{all permutations $\sigma$ in $S$}{
    \For{all strings $s$ of $n(1-\epsilon/k)+1$ bits}{$\alpha \leftarrow 0^n$
    
        \For{$i=1$ to $n$}{
        Let $x_j$ be the $i$-th variable according to $\sigma$~(i.e., $j=\sigma_i$).
        
        \If{there exists a unit clause $x_j$ or $\overline{x_j}$}{set $\alpha_j$ to make that clause true}
        \ElseIf{there exists an unused bit from $s$}{set $\alpha_j$ equal to the next unused bit from $s$}
        \Else{go to the next string in the for loop of lines 6--17. \\
        \Comment{there are no unused bits from $s$}}
        }
        \If{$\varphi$ is satisfiable, and the satisfying assignment $\alpha$ is locally minimal and has never been produced}{count $\leftarrow$ count+1}
    }
 }
 \Return count
\end{algorithm}

By adding the procedure of counting the number of distinct locally minimal satisfying assignments (see Algorithm~\ref{algo:conn_khorn3}), we obtain the following Corollary.
\begin{corollary}\label{cor:main2}
    Counting the number of connected components in the solution graph of a given $k$-Horn formula over $n$ variables can be done in deterministic $O^*(2^{(1-1/2k)n})$ time and $O^*(2^{(1-1/2k)n})$ space.
\end{corollary}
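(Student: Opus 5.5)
The plan is to reduce counting connected components to counting distinct locally minimal satisfying assignments, and then to enumerate those with the machinery already used for Theorem~\ref{thm:main2}. By Lemma~\ref{lem:local_minimal_gopalan}, every connected component of $G_\varphi$ for a Horn formula $\varphi$ contains exactly one locally minimal satisfying assignment. Conversely, every locally minimal satisfying assignment lies in some component. So the number of connected components equals the number of distinct locally minimal satisfying assignments of $\varphi$. Note that this holds without the no-unit-clause assumption, since Lemma~\ref{lem:local_minimal_gopalan} needs only that $\varphi$ is Horn; the unit clauses are handled inside the PPZ branching anyway.

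It therefore suffices to show that Algorithm~\ref{algo:conn_khorn3} outputs exactly this number. I would argue the two inclusions separately.

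\emph{Completeness.} Every locally minimal satisfying assignment $\alpha$ is produced at least once. If its Hamming weight is at most $\epsilon n$, it is found by the first loop. Otherwise every $1$-bit of $\alpha$ is critical, so $\alpha$ has at least $\epsilon n$ critical clauses. Lemma~\ref{lem:completeness} then yields a permutation $\sigma\in S$ and a string $s$ that produce $\alpha$; this is exactly Lemma~\ref{lem:find_local_minimal}.

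\emph{Soundness.} Only satisfying assignments that pass the test of Lemma~\ref{lem:minimal_check} are counted, so only locally minimal ones contribute.

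Combining the two inclusions with the ``has never been produced'' guard, each locally minimal satisfying assignment increments the counter exactly once. Hence the output equals the number of components.

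For the complexity, the time bound is unchanged from Theorem~\ref{thm:main2}. The algorithm examines $|S|\cdot 2^{n(1-\epsilon/k)+1}$ branches plus $\sum_{w\le\epsilon n}\binom{n}{w}$ low-weight assignments, and with the PPZ choice of $\epsilon$ both terms are $O^*(2^{(1-1/2k)n})$. Each candidate costs polynomial time for the local-minimality check.

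The one new ingredient, and the main obstacle, is the duplicate test: the same $\alpha$ may be generated by many pairs $(\sigma,s)$ and also by the first loop. I would store all counted assignments in a dictionary, for instance a balanced search tree or trie over $\{0,1\}^n$. Membership tests and insertions then cost $O(n\log N)$, where $N$ is the number of stored items. Since at most one item is inserted per enumerated candidate, $N=O^*(2^{(1-1/2k)n})$, which gives the claimed space bound. This also keeps the time overhead polynomial per candidate, so the total time remains $O^*(2^{(1-1/2k)n})$.
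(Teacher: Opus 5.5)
Your proposal is correct and follows the paper's route: Lemma~\ref{lem:local_minimal_gopalan} identifies components with locally minimal satisfying assignments, and Algorithm~\ref{algo:conn_khorn3} enumerates all of them via Lemma~\ref{lem:find_local_minimal}, while your dictionary for the ``never been produced'' test spells out what the paper leaves implicit and is what accounts for the $O^*(2^{(1-1/2k)n})$ space bound. One caution, inherited from the paper rather than introduced by you, concerns your remark that some $\epsilon<1/2$ makes both $\sum_{w\le\epsilon n}\binom{n}{w}$ and $2^{(1-\epsilon/k)n}$ equal to $O^*(2^{(1-1/2k)n})$: balancing $H(\epsilon)=1-\epsilon/k$ (with $H$ the binary entropy) gives an exponent strictly above $1-1/2k$ that only approaches it for large $k$, so the exact time bound holds only to the extent the paper's appeal to Theorem~\ref{theo:paturi} does.
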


\section{Complexity on the Boolean Connectivity of $k$-Horn Formulas with Bounded Variable Occurrence}

We investigate the complexity of {\sc Conn $k$-Horn} with bounded variable occurrences.
In Section~\ref{sec:poly_horn_2}, we present a polynomial-time algorithm for {\sc Conn $k$-Horn-2}, where each variable appears at most twice.
In Section~\ref{sec:poly_E3}, we provide a polynomial-time algorithm for {\sc Conn E$k$-Horn-3}, where each clause has exactly $k$ literals and each variable appears at most $k$ times.
In Section~\ref{sec:hardness}, we show the $\mathsf{coNP}$-completeness of {\sc Conn 3-Horn-E3}, where each variable appears exactly three times.

\subsection{Polynomial-time Algorithm for {\sc\bfseries Conn $k$-Horn-2}}\label{sec:poly_horn_2}

We denote $\varphi$ as a $k$-Horn formula over $n$ variables.
If the solution graph $G_{\varphi}$ of $\varphi$ is connected, then there exists a unique locally minimal satisfying assignment $0^n$ from Corollary~\ref{col:local_minimal_makino}.
Therefore, if $\varphi$ has a variable $x_i$ appearing only as a negative literal, we can show that the solution graph $G_{\varphi}$ is connected if and only if the solution graph $G_{\varphi|_{x_i=0}}$ is connected, where $\varphi|_{x_i=0}$ denotes $\varphi$ after assigning 0 to $x_i$.
This enables us to obtain a $k$-Horn formula in which each variable appears at least once as a positive literal.
Therefore, in the following, we can consider an instance for {\sc Conn $k$-Horn-2} as a $k$-Horn formula over $n$ variables where each variable appears at most twice and each variable appears at least once as a positive literal.
We also show that if there is a variable $x_i$ appearing in the clause $(x_i\vee\overline{x_i})$ in $\varphi$, then it does not appear in any other clause since each variable of $\varphi$ appears at most twice.
Therefore, we can eliminate the variable $x_i$ from $\varphi$ since the clause $x_i\vee\overline{x_i}$ is satisfied regardless of the value of $x_i$.
Throughout this section, we assume that any instance of {\sc Conn $k$-Horn-2} contains no such variables.
We now prove two auxiliary lemmas for the polynomial-time algorithm for {\sc Conn $k$-Horn-2}.

\begin{lemma}
\label{lem:horn-2_property}
Let $\varphi$ be a $k$-Horn formula over $n$ variables without unit clauses such that each variable appears at most twice in total, and at least once as a positive literal.
Then, $\varphi$ is a 2-Horn formula.
\end{lemma}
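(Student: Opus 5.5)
Double counting on literal occurrences. Let $n$ be the number of variables and $m$ the number of clauses, and let $L$ be the total number of literal occurrences. Every variable appears at most twice, so $L \le 2n$. Every variable appears at least once positively, so the total number of positive occurrences $P$ satisfies $P \ge n$. Since $\varphi$ is Horn, each clause has at most one positive literal, so $P \le m$, and therefore $m \ge n$.

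There are no unit clauses, so every clause has length at least $2$, giving $L \ge 2m \ge 2n$. Combined with $L \le 2n$, all the inequalities are equalities: $L = 2m = 2n$. Hence every clause has length exactly $2$, and $\varphi$ is a $2$-Horn formula (indeed a $2$-CNF that is Horn). As by-products, $m = n$, each clause has exactly one positive literal, and each variable appears exactly twice, once positively and once negatively.

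One small point concerns the convention from the text before the lemma: clauses of the form $(x_i \vee \overline{x_i})$ have been removed. This does not affect the counting, since such a clause still has length $2$. It also needs no separate treatment if a clause contains repeated literals, because each occurrence is counted toward the at-most-twice bound. The argument has no real obstacle; the only care needed is to invoke the Horn property ($P \le m$) and the absence of unit clauses (clause length $\ge 2$) at the right steps.
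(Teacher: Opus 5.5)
Your proof is correct and follows the paper's own argument. Both bound the total number of literal occurrences above by $2n$ (at most two occurrences per variable) and below by $2n$, using the fact that there are at least $n$ clauses containing a positive literal and that no clause is a unit clause; together these force every clause to have length exactly two. Your explicit chain $2n \ge L \ge 2m \ge 2n$ spells out the step the paper leaves implicit, namely that $L = 2n$ gives length two for every clause only because it also forces $m = n$.
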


\begin{proof}
Since each variable appears at most twice in $\varphi$, the total number of literals in $\varphi$ is at most $2n$.
Also, $\varphi$ has at least $n$ implication clauses since each variable appears at least once as a positive literal.
We can show that the total number of literals in $\varphi$ is at least $2n$ since $\varphi$ has no unit clauses.
Therefore, the total number of literals in $\varphi$ is $2n$.
This means that the length of each clause in $\varphi$ is two since $\varphi$ has no unit clause.
This completes the proof.
\end{proof}

\begin{lemma}
\label{lem:horn-2_disconnected}
Let $\varphi$ be a $k$-Horn formula over $n$ variables without unit clauses such that each variable appears at most twice in total, and at least once as a positive literal.
Then, the solution graph $G_\varphi$ is disconnected.
\end{lemma}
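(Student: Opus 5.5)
By Lemma~\ref{lem:horn-2_property}, $\varphi$ is a 2-Horn formula. The counting in its proof gives more: every variable appears exactly twice, every clause has length exactly two, and there are exactly $n$ clauses. Each clause contains at most one positive literal, and each of the $n$ variables needs its own positive occurrence. Hence there are exactly $n$ positive literals, one per clause. So every clause is an implication clause $(x_j \vee \overline{x_i})$, i.e.\ $x_i \Rightarrow x_j$. Each variable appears exactly once positively and exactly once negatively, and $\varphi$ has no restraint clauses.

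The plan is to exhibit a nonzero satisfying assignment and then apply Corollary~\ref{col:schwerd} (equivalently Corollary~\ref{col:local_minimal_makino}). Consider the directed graph $H$ on the variables with an arc $x_i \to x_j$ for each clause $(x_j\vee\overline{x_i})$. Every vertex has in-degree one and out-degree one, and self-loops are excluded by our preprocessing assumption. So $H$ is a disjoint union of directed cycles of length at least $2$, and each clause constrains only variables lying on a single cycle.

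Take one cycle $U$ of $H$. Setting every variable of $U$ to $1$ and all others to $0$ satisfies $\varphi$: clauses inside $U$ have a true head, and clauses elsewhere have a false tail. Moreover, $U$ is a self-implicating set, since each $x\in U$ is the head of the arc from its predecessor in $U\setminus\{x\}$. It is also maximal, because nothing outside $U$ is implied by $U$ (the arcs leaving $U$ stay in $U$). It contains no restraint set, since no restraint clauses exist. As $U\neq\emptyset$ and $\varphi$ has no positive unit clauses, Corollary~\ref{col:schwerd} yields that $G_\varphi$ is disconnected.

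As a more elementary alternative, one can argue through local minimality. The assignment $\alpha$ that is $1$ exactly on $U$ is locally minimal: flipping any $x\in U$ to $0$ falsifies the clause $(x\vee\overline{y})$, where $y$ is the predecessor of $x$ on the cycle, because $y\neq x$ is still $1$. Since $\alpha\neq 0^n$, Corollary~\ref{col:local_minimal_makino} gives disconnectedness.

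The main obstacle is the structural step: extracting from the proof of Lemma~\ref{lem:horn-2_property} that the counting is tight. The tightness gives exactly one positive and one negative occurrence per variable, and this is what forces $H$ to be a union of cycles. Here we must use that there are no unit clauses and no $(x\vee\overline{x})$ clauses.
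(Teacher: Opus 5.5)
Your proof is correct and follows the paper's argument: the tight count forces every clause to be a two-literal implication clause $(x_j\vee\overline{x_i})$ with $i\neq j$ and rules out restraint clauses, and a nonzero locally minimal satisfying assignment then gives disconnectedness via Corollary~\ref{col:local_minimal_makino}. The only difference is that the paper takes the all-ones assignment, which is locally minimal for the same reason you give for a single cycle, so your cycle decomposition of $H$ is not needed (though it does yield the finer fact that each cycle is its own maximal self-implicating set).
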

\begin{proof}
From the lemma~\ref{lem:horn-2_property}, $\varphi$ is a 2-Horn formula.
Moreover, from the proof of Lemma~\ref{lem:horn-2_property}, the total number of literals in $\varphi$ is $2n$ and $\varphi$ has at least $n$ implication clauses.
Therefore, we can show that $\varphi$ has exactly $n$ implication clauses and there exists no restraint clauses.
This means that each clause has exactly one positive literal and exactly one negative literal.
Hence, the assignment $\alpha'$ assigned the value 1 to all variables in $\varphi$ is a satisfying assignment.
It also holds that any assignment obtained by flipping any 1's bit of $\alpha'$ is not a satisfying assignment of $\varphi$.
This concludes that $\alpha'$ is a locally minimal satisfying assignment.
This completes the proof since Corollary~\ref{col:local_minimal_makino} holds. 
\end{proof}

We present the polynomial-time algorithm based on Lemma~\ref{lem:horn-2_disconnected}.

\begin{theorem}
\label{thm:horn-2_poly}
{\sc Conn $k$-Horn-2} can be solved in polynomial time when each variable appears at most twice.
\end{theorem}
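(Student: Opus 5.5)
The plan is to give an iterative reduction procedure. Each step either settles the answer directly or shrinks the formula while preserving connectivity. When no step applies any more, the remaining formula will satisfy the hypotheses of Lemma~\ref{lem:horn-2_disconnected}.

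First, unit clauses. A positive unit clause $x_i$ forces $x_i=1$ in every satisfying assignment, and a negative unit clause $\overline{x_i}$ forces $x_i=0$. In either case the solution graph of $\varphi$ is isomorphic to that of $\varphi|_{x_i=b}$, because the coordinate $x_i$ is constant on $V_\varphi$ and Hamming adjacency is unchanged. So I replace $\varphi$ by the simplified formula. If an empty clause arises, $V_\varphi=\emptyset$, which I treat as connected by convention or report according to the paper's convention. Simplification keeps the formula Horn, keeps clause lengths at most $k$, and does not increase occurrence counts.

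Second, pure-negative variables. If some $x_i$ appears only negatively, then it is $0$ in every locally minimal satisfying assignment. The justification: any satisfying assignment with $x_i=1$ stays satisfying after flipping $x_i$ to $0$, since only positive occurrences could become false and there are none. By Lemma~\ref{lem:local_minimal_gopalan} (equivalently Corollary~\ref{col:local_minimal_makino}), the components of $G_\varphi$ correspond bijectively to its locally minimal satisfying assignments, and these are exactly the locally minimal satisfying assignments of $\varphi|_{x_i=0}$ extended by $x_i=0$. So the number of components is preserved. I set $x_i=0$, which deletes every clause containing $\overline{x_i}$.

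Third, tautological clauses $(x_i\vee\overline{x_i})$. As noted in the text, such an $x_i$ occurs nowhere else, so I drop the variable and the clause. This multiplies the solution graph by an edge $K_2$, which preserves connectivity.

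Each of these steps removes at least one variable, so the loop runs at most $n$ times, and each step is polynomial. When no step applies, either no variables remain, or the formula has no unit clauses and every variable appears at most twice and at least once positively. In the first case the solution graph is a single point and the answer is connected. In the second case Lemma~\ref{lem:horn-2_disconnected} says $G_\varphi$ is disconnected. Because every step preserved connectivity (or the component count), the answer for the original formula is the same.

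The step I expect to need the most care is the interaction between reductions: setting a variable to $0$ may create new unit clauses, new pure-negative variables, or even an unsatisfiable (empty-solution) formula. I would handle this by simply re-running the whole case analysis in the loop. I would also verify that the isomorphism/component-count argument for unit propagation holds even when $\varphi$ has positive unit clauses, since Corollary~\ref{col:local_minimal_makino} assumes none.
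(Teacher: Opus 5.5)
Your proposal is correct and follows the paper's own proof. Both run unit propagation, repeatedly fix variables with no positive occurrence to $0$, and discard tautological clauses $(x_i\vee\overline{x_i})$ (the paper imposes this as a standing assumption of the section). Both then conclude connectivity if every variable has been eliminated, and disconnection via Lemma~\ref{lem:horn-2_disconnected} otherwise. Your step justifications are more explicit than the paper's: the isomorphism argument for unit propagation, the correspondence of locally minimal assignments via Lemma~\ref{lem:local_minimal_gopalan} for the pure-negative step, and the empty-clause case. One concern is unnecessary: fixing a pure-negative variable to $0$ only deletes clauses and never creates unit clauses, which is why the paper's single order (propagation first, then elimination) already suffices.
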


\begin{proof}
If a given $k$-Horn formula contains some unit clauses, then we repeatedly apply unit propagation until no unit clauses remain.
Therefore, we assume that a given Horn formula contains no unit clauses.
For a given $k$-Horn formula $\varphi$ over $n$ variables, if there exist variables that appear only as negative literals, we set these variables to 0.
We repeat this procedure until there are no such variables in $\varphi$.
Then, if all variables are assigned the value 0 in this procedure, this means that the solution graph $G_\varphi$ is connected from Corollary~\ref{col:local_minimal_makino} since $\varphi$ has no locally minimal non-zero satisfying assignments.
Otherwise, each variable in the remaining Horn formula appears at most twice in total, and at least once as a positive literal.
This means that the solution graph of the remaining formula is disconnected from Lemma~\ref{lem:horn-2_disconnected}.

We now estimate the running time of this algorithm.
Setting variables appearing only as negative literals to 0 can be done in $O(n)$ time since the number of clauses in $\varphi$ is $O(n)$.
We repeat this procedure at most $n$ times.
Hence, our algorithm runs in time $O(n^2)$.
\end{proof}

\subsection{Polynomial-time Algorithm for 
{\sc\bfseries Conn E$k$-Horn-$k$}}\label{sec:poly_E3}

For clarity of exposition, we consider {\sc Conn E$3$-Horn-$3$}, where each clause has exactly three literals and each variable appears at most three times.
For a set of variables $X$ and a CNF formula $\varphi$, we denote by $C_{\varphi[X]}$ the set of all clauses in $\varphi$ that consist of only variables in $X$.  

\begin{lemma}\label{lem:polyk3}
Let $\varphi$ be an instance of {\sc Conn E3-Horn-3} over $X=\{x_1,x_2,\dots,x_n\}$. The solution graph $G_{\varphi}$ is disconnected if and only if there exists a partition of the set $X$ into two sets $U$ and $X\setminus U$ and a partition of the set $C_{\varphi[X]}$ of clauses of $\varphi$ into two sets $C_{\varphi[U]}$ and $C_{\varphi[X\setminus U]}$ such that $U$ is a non-empty set of variables which appear once as a positive literal. 
\end{lemma}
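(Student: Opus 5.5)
We read the statement as follows. $G_\varphi$ is disconnected if and only if there is a non-empty $U\subseteq X$ such that (a) every clause of $\varphi$ lies entirely in $U$ or entirely in $X\setminus U$, so that the clauses split as $C_{\varphi[U]}\cup C_{\varphi[X\setminus U]}$, and (b) every variable of $U$ occurs exactly once as a positive literal. The plan is to prove both directions with Corollary~\ref{col:local_minimal_makino} and Corollary~\ref{col:schwerd}, combined with a counting argument. The counting uses two facts: every clause has exactly three literals, and every variable occurs at most three times. Since every clause has length three, $\varphi$ has no unit clauses, so both corollaries apply.

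\emph{($\Leftarrow$)} Let $U$ satisfy (a) and (b).
\begin{enumerate}
\item Each clause is Horn, so it has at most one positive literal. By (a) and (b), the clauses of $C_{\varphi[U]}$ carrying a positive literal number exactly $|U|$.
\item The variables of $U$ have at most $3|U|$ occurrences in total, and all of them lie in $C_{\varphi[U]}$. Hence $3|C_{\varphi[U]}|\le 3|U|$.
\item Combining the two bounds, $C_{\varphi[U]}$ consists of exactly $|U|$ implication clauses. It contains no restraint clause, and each variable of $U$ occurs exactly three times.
\item Define $\alpha$ by $\alpha(x)=1$ for $x\in U$ and $\alpha(x)=0$ otherwise. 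Every clause of $C_{\varphi[X\setminus U]}$ has length three and at most one positive literal, so it contains a negative literal and is satisfied by the zeros. Every clause of $C_{\varphi[U]}$ has a true positive literal. So $\alpha$ satisfies $\varphi$.
\item For $u\in U$, flipping $u$ to $0$ falsifies the unique clause in which $u$ appears positively, because all its negative literals are on variables of $U$, which remain true. Hence $\alpha$ is a non-zero locally minimal satisfying assignment.
\end{enumerate}
By Corollary~\ref{col:local_minimal_makino}, $G_\varphi$ is disconnected.

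\emph{($\Rightarrow$)} Suppose $G_\varphi$ is disconnected.
\begin{enumerate}
\item By Corollary~\ref{col:schwerd}, there is a non-empty maximal self-implicating set $U$ containing no restraint set.
\item Fix $x\in U$. Since $x$ is implied by $U\setminus\{x\}$, and there are no positive unit clauses, there is a clause $C_x$ with $P(C_x)=\{x\}$ and $\emptyset\neq N(C_x)\subseteq U\setminus\{x\}$.
\item The clauses $C_x$ are pairwise distinct, since their positive literals differ.
\item Each $C_x$ has three literals, all on variables of $U$. So together these clauses use $3|U|$ occurrences of $U$-variables.
\item The $U$-variables have at most $3|U|$ occurrences in total. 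Therefore every occurrence of a variable of $U$ lies in some $C_x$.
\item Consequently, no other clause touches $U$, which gives the clause partition (a). Each $x\in U$ occurs positively only in $C_x$, which gives (b).
\end{enumerate}

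The main obstacle is the step showing the clause $C_x$ exists. Implication by $U\setminus\{x\}$ is defined via forcing, which could a priori go through chains of propagation. Here, however, Lemma~\ref{lem:sch2} identifies $U$ with the set of true variables of a locally minimal assignment $\alpha$. Local minimality at $x$ means that flipping $x$ to $0$ falsifies some clause. That clause must contain $x$ positively and have all its other literals false under $\alpha$, i.e.\ negative literals on variables of $U$. This supplies $C_x$ directly, and the rest of the argument is counting.
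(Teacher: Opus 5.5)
Your proof is correct and uses the same counting as the paper: $|U|$ pairwise distinct length-three implication clauses lying entirely on $U$ already exhaust the at most $3|U|$ occurrences of $U$-variables. The difference is the characterization you pair the counting with.

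In the ($\Leftarrow$) direction, the paper argues that $U$ is a non-empty maximal self-implicating set containing no restraint set and applies Corollary~\ref{col:schwerd}. You instead check directly that $1_U$ is a non-zero locally minimal satisfying assignment and apply Corollary~\ref{col:local_minimal_makino}. This makes satisfaction and minimality explicit, where the paper's maximality claim is left informal.

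In the ($\Rightarrow$) direction, the paper simply asserts from self-implication that each $u\in U$ has a clause $u\vee\overline{x}\vee\overline{y}$ with $x,y\in U$. You derive $C_x$ from local minimality of the minimum assignment, which settles the forcing-chain issue you raised; after that the argument is pure counting. You could even bypass Corollary~\ref{col:schwerd} and Lemma~\ref{lem:sch2} here: Corollary~\ref{col:local_minimal_makino} gives a non-zero locally minimal $\alpha$ directly, and $U=\alpha^{-1}(1)$ works.

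Both arguments tacitly assume that no clause contains a complementary pair $x,\overline{x}$. Your step 5 needs this assumption, and without it the lemma fails. For example, $(u\vee\overline{u}\vee\overline{w})\wedge(w\vee\overline{u}\vee\overline{w})$ satisfies the condition with $U=\{u,w\}$, yet it is a tautology, so its solution graph is connected.
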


\begin{proof}
We first prove the if-part, 
and assume that a variable set $U$ satisfies the if-condition.
We show that $U$ is a non-empty maximal self-implicating set containing no restraint set; then $G_\varphi$ is disconnected from Corollary~\ref{col:schwerd}.
The number of implication clauses in $C_{\varphi[U]}$ is $|U|$ since each variable in $U$ appears once as a positive literal.
These implication clauses have $3|U|$ literals since the length of each clause in $C_{\varphi[U]}$ is three, and then the total number of literals of clauses in $C_{\varphi[U]}$ is at least $3|U|$.
However, the total number of occurrences of variables in $U$ is at most $3|U|$ since each variable appears at most three times.
These lead to the total number of literals in $C_{\varphi[U]}$ being exactly $3|U|$.
Thus, all literals in $C_{\varphi[U]}$ appear in implication clauses in $C_{\varphi[U]}$, i.e., there is no restraint set in $U$.
For any implication clause $C$, the positive literal $x$ in $C$ is assigned true when the other negative literal(s) are assigned false.
This implies that every $x\in U$ is implied by $U\setminus \{x\}$.
In addition, any variable in any clause in $C_{\varphi[X\setminus U]}$ cannot be implied by $U$ since $U \cap (X\setminus U) = \emptyset$. 
Hence, $U$ is a non-empty maximal self-implicating set and contains no restraint set. 
This concludes the proof of the if-part.
    
We next prove the only-if part.
Assume that $G_\varphi$ is disconnected.
From Corollary~\ref{col:schwerd}, $\varphi$ has a non-empty maximal self-implicating set $U$ containing no restraint set. 
This implies that every variable in $U$ appears as a positive literal at least once.
We claim that the number of implication clauses with only variables in $U$ is $|U|$.
There exist at least $|U|$ implication clauses in $C_{\varphi[X]}$ since $U$ is a self-implicating set, i.e., there exists a clause $u \vee \overline{x}\vee \overline{y}$ for every $u\in U$.
Since the total number of occurrences of variables in $U$ is at most $3|U|$ and the length of each clause is three, there exist at most $|U|$ implication clauses with only variables in $U$.
If the number of such implication clauses with only variables in $U$ is at most $|U|-1$, there exists some $u\in U$ appearing in an implication clause as positive literals with at least one variable $x$ in $X\setminus U$.
We denote $(u\vee \overline{x} \vee \overline{y})$ as such a clause.
In this case, $u$ is implied by $x$ and $y$.
This contradicts the fact that $U$ is a maximal self-implicating set.
Thus, the number of implication clauses with only variables in $U$ is $|U|$, and this implies that every variable in $U$ appears as a positive literal only once.
The number of literals of these clauses is $3|U|$, and it equals the maximum total number of occurrences of variables in $U$.
This implies that the pair ($U$, $X\setminus U$) is a partition of $X$, and the pair of $(C_{\varphi[U]}$, $C_{\varphi[X\setminus U]})$ is a partition of $C_{\varphi[X]}$.
This concludes the proof of the only-if part.
\end{proof}

Now, we present a polynomial-time algorithm based on Lemma~\ref{lem:polyk3}.

\begin{theorem}\label{thm:main1}
{\sc Conn $3$-Horn} is solvable in polynomial time when each clause has length exactly three and each variable appears at most three times.
\end{theorem}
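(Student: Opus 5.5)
Following Lemma~\ref{lem:polyk3}, we reduce the problem to a combinatorial search. We look for a non-empty set $U \subseteq X$ such that three conditions hold: every variable of $U$ occurs exactly once as a positive literal; every clause containing a variable of $U$ has all its variables in $U$; and $|U|$ equals the number of clauses inside $U$, all of which are implication clauses. Equivalently, $U$ is a union of "closed" blocks. So the natural object is the variable--clause incidence graph $H$ of $\varphi$. Its vertices are the variables and the clauses, and each variable is joined to the clauses it occurs in.

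First we preprocess. A clause containing both $x$ and $\overline{x}$ is satisfied by every assignment. Since all clauses have exactly three literals, we must be careful: deleting the clause would change occurrence counts. We simply keep it; it is an implication clause whose positive literal $x$ is trivially "implied". Lemma~\ref{lem:polyk3} is stated for E3-Horn-3 instances, so we take it as given for such inputs.

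Next we compute the connected components of $H$ in linear time. We claim the set $U$ exists if and only if some component $K$ of $H$ satisfies all of the following:
\begin{itemize}
\item $K$ contains no restraint clause;
\item every variable in $K$ occurs exactly once positively;
\item the number of clauses in $K$ equals the number of variables in $K$.
\end{itemize}
For the "if" direction, take $U$ to be the variable set of $K$. All clauses involving $U$ lie in $K$, so the clauses split into $C_{\varphi[U]}$ and $C_{\varphi[X\setminus U]}$. The conditions of the lemma then hold, and the lemma gives that $G_\varphi$ is disconnected.

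For the "only if" direction, let $U$ be as in the lemma. Since $U$ is closed under sharing clauses, $U$ is a union of components of $H$. Every clause inside $U$ is an implication clause, so each such component contains no restraint clause. The counting in the proof of Lemma~\ref{lem:polyk3} shows that each variable of $U$ occurs exactly once positively and that the clauses inside $U$ number exactly $|U|$. Summed over the components in $U$, this gives $\sum_K(\#\text{clauses}_K - \#\text{vars}_K)=0$. Moreover, every variable in $U$ has a positive occurrence, so its component has at least as many clauses as variables; hence each term of the sum is non-negative. Therefore each term is zero, and any single component of $U$ is a valid witness.

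The main obstacle is that the equivalence in Lemma~\ref{lem:polyk3} is stated loosely. We need the precise conditions on $U$: that $U$ is a union of components of $H$, that every clause on $U$ is an implication clause, and that each variable in $U$ has exactly one positive occurrence. This is the step where the component decomposition is justified, and it must be checked against the lemma's proof. Once it is settled, the algorithm is a single pass over the components of $H$ with counting, running in $O(n+m)$ time. That establishes Theorem~\ref{thm:main1}, and the same argument with $3$ replaced by $k$ gives the E$k$-Horn-$k$ case.
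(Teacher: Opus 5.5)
Your algorithm is essentially the paper's: compute the connected components of the clause-sharing structure and test each one against the criterion of Lemma~\ref{lem:polyk3}. The paper uses the graph on variables; your incidence graph yields the same variable classes. Your extra tests (no restraint clause, as many clauses as variables) are redundant. If every variable of a component occurs exactly once positively, counting shows its $|U|$ implication clauses already use all $3|U|$ allowed occurrences. The summation argument is also unnecessary, since the property ``each variable occurs exactly once positively'' passes directly from a witness $U$ to every component inside it.

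The step that is wrong is your treatment of tautological clauses, and it makes the algorithm incorrect on some inputs. A clause $(x\vee\overline{x}\vee\overline{y})$ imposes no constraint, so it does not make $x$ implied by anything. Keeping it breaks the if-direction of Lemma~\ref{lem:polyk3}. For example, $(x\vee\overline{x}\vee\overline{y})\wedge(y\vee\overline{x}\vee\overline{y})$ is a legal instance: both clauses are Horn of length three, and $x$ and $y$ each occur three times. Every assignment satisfies it, so its solution graph is connected. Yet its single component $\{x,y\}$ passes all your tests, and your algorithm answers \emph{disconnected}.

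Your reason for not deleting such clauses is unfounded. Deleting a clause only lowers occurrence counts and leaves every other clause of length exactly three, so the result is still an E3-Horn-3 instance with the same satisfying assignments, and the lemma applies to it. The fix is to delete tautological clauses first. The paper tacitly assumes there are none: its if-direction uses that each $x\in U$ is forced by $U\setminus\{x\}$ through its implication clause, which fails when that clause also contains $\overline{x}$. With that preprocessing, your argument goes through.
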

\begin{proof}
A given 3-Horn formula $\varphi$ over $X=\{x_1, x_2, \ldots, x_n\}$ has at most $3n$ literals and $n$ clauses since the length of each clause is three and each variable appears at most three times.
We find a non-empty variable set that satisfies Lemma~\ref{lem:polyk3} to determine the connectivity of the solution graph of $\varphi$.

We first enumerate the non-empty variable set $U$ such that $C_{\varphi[X]} = C_{\varphi[U]} \cup C_{\varphi[X\setminus U]}$ and $C_{\varphi[U]} \cap C_{\varphi[X\setminus U]}=\emptyset$ as follows.
We construct the graph $G=(X, E)$, where $E=\{(x_i, x_j) : \text{$C_{\varphi[X]}$ has a clause containing $x_i, x_j\in X$}\}$.
It is easy to see that the variable set $U'$ corresponding to the vertices in any connected component of $G$ constructs $C_{\varphi[U']}$.
Thus, we can enumerate the non-empty variable sets to enumerate the connected components of $G$.
We use the depth-first search for $G$ to enumerate all the connected components.
It remains to check whether each $U'$ is a set of variables that appear once as a positive literal.
If at least one such $U'$ exists, the solution graph is disconnected.
Otherwise, the solution graph is connected, as shown in Lemma~\ref{lem:polyk3}.

We now estimate the running time of this algorithm.
Constructing the graph $G$ can be done in $O(n)$ time since the number of edges is at most $3n$.
Enumerating the non-empty variable set $U'$ can be done in time $O(n)$ by the depth-first search.
Note that the number of non-empty variable sets is $O(n)$.
We can check whether each $U'$ is a set of variables that appear once as a positive literal in $O(n)$ by checking the clauses.
Hence, our algorithm runs in time $O(n^2)$.
\end{proof}

We can show Corollary~\ref{col:polyk} since the discussion in Lemma~\ref{lem:polyk3} also holds for $k \ge 3$.

\begin{corollary}\label{col:polyk}
Let $\varphi$ be an instance of {\sc Conn E$k$-Horn-$k$} over $X=\{x_1,x_2,\dots,x_n\}$. The solution graph $G_{\varphi}$ is disconnected if and only if there exists a partition of the set $X$ into two sets $U$ and $X\setminus U$ and a partition of the set $C_{\varphi[X]}$ of clauses of $\varphi$ into two sets $C_{\varphi[U]}$ and $C_{\varphi[X\setminus U]}$ such that $U$ is a non-empty set of variables which appear once as a positive literal.
\end{corollary}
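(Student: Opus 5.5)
The plan is to repeat the argument of Lemma~\ref{lem:polyk3} with the constant $3$ replaced by $k$. It is a double-counting argument: each clause of $\varphi$ has exactly $k$ literals, and each variable occurs at most $k$ times. Both directions rest on Corollary~\ref{col:schwerd}. Since every clause has length exactly $k$, and we may assume $k\ge 2$ (for $k=1$ the formula consists only of unit clauses and is trivial), $\varphi$ has no positive unit clauses, so the corollary applies.

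For the if-part, let $U$ be as in the statement: non-empty, each variable of $U$ appearing exactly once as a positive literal, and no clause mixing variables of $U$ and $X\setminus U$. First, the $|U|$ clauses containing the positive occurrences of variables of $U$ lie in $C_{\varphi[U]}$, because no clause is mixed. Second, these are $|U|$ distinct implication clauses: two positive literals cannot share a Horn clause. Third, they contain $k|U|$ literal occurrences, which already exhausts the at most $k|U|$ occurrences available to variables of $U$. Hence $C_{\varphi[U]}$ consists exactly of these implication clauses, so no restraint clause lies inside $U$. Each $u\in U$ is implied by the other $k-1$ variables of its own clause, all of which lie in $U\setminus\{u\}$; thus $U$ is self-implicating. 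For maximality, note that any clause with positive literal outside $U$ lies in $C_{\varphi[X\setminus U]}$, so its negative variables are outside $U$. Setting $U$ true and everything else false is therefore a satisfying assignment, so nothing outside $U$ is forced. Corollary~\ref{col:schwerd} then gives disconnection.

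For the only-if part, take the non-empty maximal self-implicating set $U$ with no restraint set given by Corollary~\ref{col:schwerd}. Each $u\in U$ needs an implication clause with positive literal $u$ whose negative variables lie in $U$. Otherwise setting $U\setminus\{u\}$ true would not force $u$: the assignment with exactly $U\setminus\{u\}$ true satisfies every clause with head $u$, and the closure argument from maximality handles the other heads. This yields $|U|$ distinct clauses entirely inside $U$ with $k|U|$ literals total, which saturates the occurrence bound for $U$. Consequently every occurrence of a variable of $U$ lies in one of these clauses. So each $u$ appears exactly once positively, and no clause mixes $U$ with $X\setminus U$, which gives the required partition.

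The main obstacle is the step in the only-if direction asserting that each $u$ has a witnessing implication clause \emph{entirely inside} $U$. Implication in Horn formulas is by forward chaining: $u$ is forced by $U\setminus\{u\}$ only if some clause with head $u$ has all its negative variables in the forward closure of $U\setminus\{u\}$. That closure is contained in $U$ by maximality, since it is a subset of the closure of $U$, and this gives the needed clause. I would state this closure fact explicitly, which also repairs the slightly informal counting in the paper's $k=3$ proof.
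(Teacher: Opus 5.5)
Your proposal is correct and is essentially the paper's proof: you rerun the double-counting argument of Lemma~\ref{lem:polyk3} with $3$ replaced by $k$, on top of Corollary~\ref{col:schwerd}. Your explicit forward-chaining argument that each $u\in U$ has an implication clause whose negative variables lie in $U\setminus\{u\}$ properly fills in the step the paper handles loosely. One small correction: the case $k=1$ is not trivial but false, since $\varphi=(x_1)$ satisfies the condition with $U=\{x_1\}$ while its solution graph is a single point. That case should therefore be excluded outright, consistent with the paper working with $k\ge 3$.
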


Gopalan et al.~\cite{GopalanKMP09} showed that {\sc Conn 2-CNF} is solvable in polynomial time with no bounded variable occurrence. The proof of Theorem~\ref{thm:main1} also holds for $k \ge 3$ from Corollary~\ref{col:polyk}. Therefore, we can show Corollary~\ref{col:kpoly_algo}.

\begin{corollary}\label{col:kpoly_algo}
    For $k \ge 2$, {\sc Conn E$k$-Horn-$k$} is solvable in polynomial time.
\end{corollary}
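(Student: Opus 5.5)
We split on $k$. For $k=2$ nothing new is needed: an E$2$-Horn-$2$ formula is in particular a $2$-CNF formula, and Gopalan et al.~\cite{GopalanKMP09} showed that {\sc Conn 2-CNF} is solvable in polynomial time with no restriction on variable occurrences. So the work is the case $k\ge 3$, where we lift the algorithm of Theorem~\ref{thm:main1} using the characterization in Corollary~\ref{col:polyk}.

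For $k \ge 3$ we first reduce to the setting of Corollary~\ref{col:schwerd}, whose proof underlies Corollary~\ref{col:polyk} and which requires no positive unit clauses. Since every clause has exactly $k\ge 3$ literals, there are no unit clauses at all, so the hypothesis holds. We then rerun the counting argument of Lemma~\ref{lem:polyk3} with $3$ replaced by $k$:
\begin{itemize}
\item each implication clause on $U$ contributes $k$ literal occurrences;
\item the variables of $U$ have at most $k|U|$ occurrences in total;
\item hence exactly $|U|$ implication clauses lie inside $U$, and they use up every occurrence of the variables of $U$.
\end{itemize}
This shows Corollary~\ref{col:polyk} is valid. Consequently, $G_\varphi$ is disconnected if and only if some non-empty union of groups of variables is closed under clause membership (no clause of $\varphi$ mixes it with outside variables) and every variable in it appears exactly once as a positive literal.

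The algorithm then follows the proof of Theorem~\ref{thm:main1}.
\begin{enumerate}
\item Build the graph $G=(X,E)$ joining two variables whenever they share a clause. There are at most $kn/k=n$ clauses, so $G$ has $O(k^2 n)$ edges.
\item Compute the connected components of $G$ by depth-first search.
\item For each component $U'$, check whether every variable in $U'$ occurs exactly once positively. Answer ``disconnected'' if some component passes, and ``connected'' otherwise.
\end{enumerate}

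Testing only components is enough. Any set $U$ satisfying the partition condition is a union of components, and if every variable of $U$ occurs once positively then so does every variable of each component inside $U$. Conversely, a passing component is itself a valid $U$. The total running time is $O(k^2 n + n^2)$, which is polynomial.

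The main obstacle is checking that nothing in Lemma~\ref{lem:polyk3} secretly used $k=3$. The delicate step is the only-if direction, where a clause $u\vee\overline{x}\vee\cdots$ with some $x\notin U$ must contradict maximality. The counting shows all $k|U|$ occurrences of the variables of $U$ are used by clauses inside $U$, so no clause mixes $U$ with $X\setminus U$; this argument is independent of $k$.
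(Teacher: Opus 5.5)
Your proposal is correct and follows the paper's own proof. As in the paper, you settle $k=2$ by the polynomial-time algorithm of Gopalan et al.\ for {\sc Conn 2-CNF}, and for $k\ge 3$ you rerun the counting argument of Lemma~\ref{lem:polyk3} with $3$ replaced by $k$ (giving Corollary~\ref{col:polyk}) and reuse the connected-component algorithm of Theorem~\ref{thm:main1}. Your explicit justification that it suffices to test only the connected components of the co-occurrence graph is a step the paper leaves implicit, but the argument is the same.
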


We can show the following corollary from Corollary~\ref{col:polyk}. 
\begin{corollary}\label{col:upperbound_components}
    For $k \ge 3$, given a $k$-Horn formula $\varphi$, where each clause length and each variable occurrence are exactly $k$, the number of connected components in the solution graph of $\varphi$ is at most $2^{\lfloor n/k \rfloor}$.
\end{corollary}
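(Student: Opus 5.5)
We plan to bound the number of connected components by the number of admissible sets $U$ counted through Lemma~\ref{lem:sch2}. By that lemma, connected components of $G_\varphi$ are in bijection with maximal self-implicating sets containing no restraint set; the lowest component corresponds to $U=\emptyset$. So it suffices to show that the family of such sets has size at most $2^{\lfloor n/k\rfloor}$.

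First, we characterize the nonempty such sets. Let $U$ be a nonempty maximal self-implicating set containing no restraint set. The only-if argument of Lemma~\ref{lem:polyk3}, extended to general $k$ in Corollary~\ref{col:polyk}, shows more than just existence. It shows that the clauses touching $U$ are exactly $|U|$ implication clauses of length $k$, each using only variables of $U$, and that each variable of $U$ appears in them exactly once positively. Counting literals gives $k|U|$, so $U$ is closed: no clause contains both a variable of $U$ and a variable outside $U$. Hence $U$ is a union of connected components of the graph $G=(X,E)$ built in the proof of Theorem~\ref{thm:main1}.

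Second, we argue that every such $U$ is in fact a union of \emph{good} components. Call a component $K$ of $G$ good if all clauses on $K$ are implication clauses and each variable of $K$ is positive exactly once. If $U$ is closed and satisfies this property, then each component of $G$ inside $U$ inherits it, so each is good. Conversely, for any union $U$ of good components, every variable has its implying clause inside $U$ and there is no restraint clause in $U$. Is every such union actually maximal self-implicating? Yes: variables outside $U$ share no clause with $U$, so $U$ implies nothing outside it. Therefore the sets $U$, including $\emptyset$, correspond exactly to subsets of the good components, and the number of connected components is $2^{g}$, where $g$ is the number of good components.

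Third, we bound $g$; this is the step that needs the size argument. A good component $K$ contains $|K|$ clauses of length exactly $k$ on $|K|$ variables, each of which occurs exactly $k$ times. We need $|K|\ge k$. Take one clause in $K$: it has $k$ literals, and a Horn clause of length $k$ in a good component cannot repeat a variable. Repetition would force either a tautology $x\vee\bar x$ or a duplicate literal, and we assume clauses have $k$ distinct variables, as in the counting used throughout the paper. So $K$ contains at least $k$ distinct variables. The components are disjoint, so $g k\le n$, giving $g\le\lfloor n/k\rfloor$ and the bound $2^{\lfloor n/k\rfloor}$. The main subtlety I expect is justifying the exact bijection with subsets of good components, in particular that unions of good components are maximal. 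The step that is actually needed, though, is only the injection into subsets of good components, which suffices for the upper bound.
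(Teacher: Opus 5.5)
Your proposal is correct and follows the same route as the paper: it uses Lemma~\ref{lem:sch2} together with the partition characterization of Corollary~\ref{col:polyk}, so that components correspond to choosing all-$0$ or all-$1$ on each of $m$ disjoint blocks, giving $2^m$ components, and then bounds $m\le\lfloor n/k\rfloor$. Your version is tidier than the paper's: taking the blocks to be connected components of the co-occurrence graph makes the decomposition canonical, and your observation that each block contains a clause on $k$ distinct variables (so it has at least $k$ variables) is the correct justification for the bound that the paper phrases as ``$m$ is a multiple of $k$''.
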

\begin{proof}
Let $\varphi$ be an instance of {\sc Conn $k$-Horn} with each clause length exactly $k$ and each variable occurrence exactly $k$, where the solution graph of $\varphi$ is disconnected.
A variable set of $\varphi$ is denoted by $X$.
Then, there exists at least one variable set $U\subseteq X$ such that $C_{\varphi[U]}$ and $C_{\varphi[X\setminus U]}$ partition $C_{\varphi[X]}$ and $U$ is a non-empty set of variables that appear once as a positive literal.
Now, we assume that $\varphi$ has $m$ variable sets $U_i\ (1\le i\le m)$ and a variable set $V$ such that $C_{\varphi[X]} = \bigcup_{i=1}^m C_{\varphi[U_i]} \cup C_{\varphi[V]}$ and $\bigcup_{i=1}^m U_i\cup V=X, U_i\cap V=\emptyset$, and $U_i\cap U_j=\emptyset$ for any $i,j~(i\neq j)$, where $U_i$ is a non-empty set of variables which appear once as a positive literal.
Then, we can construct all locally minimal satisfying assignments by setting the value 0 or 1 to all variables for each $U_i\ (1\le i\le m)$ and setting the value 0 to all variables in $V$ from Lemma~\ref{lem:sch2}. Therefore, there exist $2^m$ locally minimal satisfying assignments of $\varphi$.
Moreover, $m$ is a multiple of $k$ since the length of each clause and the number of occurrences of each variable are exactly $k$.
Hence, the maximum value of $m$ is $\lfloor n/k \rfloor$.
\end{proof}

\subsection{coNP-completeness of {\sc\bfseries Conn 3-Horn-E3}}\label{sec:hardness}

We first prove the $\mathsf{coNP}$-completeness of {\sc Conn 3-Horn-3}, where each variable appears at most three times, and we then modify the instance so that each variable appears exactly three times.  
Instead of proving $\mathsf{coNP}$-completeness of {\sc Conn 3-Horn-3}, we show $\mathsf{NP}$-completeness of the complement problem of {\sc Conn 3-Horn-3}~({\sc DisConn 3-Horn-3}) that asks whether the solution graph of a given 3-Horn formula with each variable appearing at most three times is disconnected.
The proof is based on a polynomial-time reduction from {\sc Monotone NAE E3-SAT-E4}, which is $\mathsf{NP}$-complete, shown in~\cite{Darmann20}.
An \emph{NAE-satisfying assignment} is a satisfying assignment in which each clause has at least one literal assigned the value 0 and at least one literal assigned the value 1.
Given a Monotone 3-CNF formula $\phi$ with $n$ variables and $m$ clauses in which each clause has exactly three literals and each variable appears exactly four times, {\sc Monotone NAE E3-SAT-E4} asks whether there exists an NAE-satisfying assignment of $\phi$.
Our reduction utilizes the framework of Schwerdtfeger~\cite{Schwerdtfeger14}.

We give the construction of a 3-Horn formula from an instance of {\sc Monotone NAE E3-SAT-E4}.
Then, the 3-Horn formula satisfies that each clause in the instance of {\sc Monotone NAE E3-SAT-E4} has at least one literal assigned the value 0 and at least one literal assigned the value 1.
Also, each variable in the 3-Horn formula appears at most three times.
Let $X=\{x_1, x_2, \dots, x_n\}$ be a set of variables.
Let $\phi$ be an instance of {\sc Monotone NAE E3-SAT-E4} over $X$ with $m$ clauses. 
Let $C_\phi = \{C_1, C_2, \dots, C_m \}$ be the set of clauses of the formula $\phi$.
For simplicity, if the variable $x_{j}$ appears in the clause $C_i$, we write $x_{j}\in C_i$.
We first replace the $k$-th occurrence of $x_j \in X$ with a new variable $x_{j,k}$, and denote by $X_\phi$ the set of new variables created from $X$.
We construct a 3-Horn formula $\varphi_\phi=\varphi_1 \wedge \varphi_2 \wedge \varphi_3 \wedge \varphi_4$.

The formula $\varphi_1$ over $X_\phi$ is as follows:
\begin{equation*}
    \varphi_1=\bigwedge_{i=1}^{m}\bigvee_{x_{j,k}\in C_i} \overline{x_{j,k}}.
\end{equation*}
It ensures that each clause in $\phi$ has at least one literal assigned the value 0.

The formula $\varphi_2$ is a conjunction of $\varphi_{2,i}$ corresponding to each $C_i\in C_\phi$ which contains $x_{j,a}, x_{k,b}, x_{\ell, c}$.
It ensures that there exists a bijection between the set of NAE-satisfying assignments of $\phi$ and the set of locally minimal non-zero satisfying assignments of $\varphi_\phi$.
Let $X_\phi' = \{x_{i,j}' : 1\le i\le n, 1\le j\le 4 \}$, $Y =\{y_{i,j} : 1\le i\le m, 1\le j\le 3 \}$, $Y' =\{y_{i,j}' :  1\le i\le m, 1\le j\le 3 \}$, $P=\{p_{i,j} : 1\le i\le m, 1\le j\le n \}$, and $Q=\{q_{i,j} : 1\le i\le m, 1\le j\le n \}$.
The formula $\varphi_2$ over $X_\phi \cup X'_{\phi} \cup Y \cup Y' \cup P \cup Q$ is as follows:

\begin{align*}
    \varphi_2 = \bigwedge_{i=1}^{m}\varphi_{2,i} =& \bigwedge_{i=1}^{m}((\overline{y_{i,1}} \vee p_{i,j})(\overline{p_{i,j}} \vee \overline{x_{j,a}} \vee q_{i,j})(\overline{q_{i,j}} \vee x_{j,a}')(\overline{q_{i,j}} \vee y_{(i\bmod m)+1,1}')
    \\ &\wedge (\overline{y_{i,2}} \vee p_{i,k})(\overline{p_{i,k}} \vee \overline{x_{k,b}} \vee q_{i,k})(\overline{q_{i,k}} \vee x_{k,b}')(\overline{q_{i,k}} \vee y_{(i\bmod m)+1, 2}')
    \\ &\wedge (\overline{y_{i,3}} \vee p_{i,\ell})(\overline{p_{i,\ell}} \vee \overline{x_{\ell, c}} \vee q_{i,\ell})(\overline{q_{i,\ell}} \vee x_{\ell, c}')(\overline{q_{i,\ell}} \vee y_{(i\bmod m)+1, 3}')).
\end{align*}

The formula $\varphi_3$ is a conjunction of $\varphi_{3,j}$ corresponding to each $x_j\in X$.
It ensures that $x_{j,1}=x_{j,2}=x_{j,3}=x_{j,4}$ for each $j\ (1\le j\le n)$.
Let $U=\{u_{j,k} : 1\le j\le n, 1\le k\le 4\}$ and $D_X = \{d_{j} : 1\le j\le n \}$. 
The formula $\varphi_3$ over $D_X \cup X_\phi \cup X'_{\phi} \cup U$ is as follows:

\begin{align*}
    \varphi_3 = \bigwedge_{j=1}^{n}\varphi_{3,j} =&\bigwedge_{j=1}^{n}((u_{j,1}\vee \overline{x_{j,1}'} \vee \overline{x_{j,2}'})(u_{j,2}\vee \overline{x_{j,3}'} \vee \overline{x_{j,4}'})(d_j\vee \overline{u_{j,1}} \vee \overline{u_{j,2}})
    \\ &\wedge (u_{j,3}\vee \overline{d_{j}})(u_{j,4}\vee \overline{d_{j}})(x_{j,1}\vee \overline{u_{j,3}})(x_{j,2}\vee \overline{u_{j,3}})(x_{j,3}\vee \overline{u_{j,4}})(x_{j,4}\vee \overline{u_{j,4}})).
\end{align*}

Recall that a self-implicating set $S$ is a set of variables such that every $x \in S$ is implied by $S \setminus \{x\}$.
The following four claims establish properties of a non-empty maximal self-implicating set for the formula above.
All proofs proceed by repeatedly invoking the argument that the existence of a clause in the formula guarantees that at least one variable appearing in the clause belongs to a non-empty maximal self-implicating set.

\begin{claim}
\label{clm:x_gadget_property}
Let $S$ be an arbitrary non-empty maximal self-implicating set.
Then, the following properties hold:
\begin{enumerate}[(a)]
    \item If $S$ contains at least one $x_{j,k}\ (1\le k\le 4)$, then $S$ contains all variables appearing in $\varphi_{3,j}$,
    \item If $S$ contains $d_j$, then $S$ contains all $x_{j,k}$ and at least one $x'_{j,k}$,
    \item If $S$ contains at least one $u_{j,k}$, then $S$ contains at least one $x'_{j,k}$.
\end{enumerate}
\end{claim}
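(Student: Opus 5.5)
The plan is to reduce everything to two local closure rules that any non-empty maximal self-implicating set $S$ satisfies, and then to trace the clauses of $\varphi_{3,j}$ by hand. For a Horn formula, ``implied by a set $T$'' means derivable from $T$ by forward chaining along implication clauses. I would first record the following two rules.

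\textbf{(R1)} If an implication clause has all its negative variables in $S$, then its positive variable is in $S$. This holds because $S$ contains every variable it implies.

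\textbf{(R2)} If $x\in S$, then some implication clause with positive literal $x$ has all its negative variables in $S$. To see this, note that $x$ is implied by $S\setminus\{x\}$. Consider the last step of the forward-chaining derivation of $x$. It uses a clause with head $x$ whose body variables are all implied by $S\setminus\{x\}$, hence by $S$, hence lie in $S$ by maximality.

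These two rules are exactly the ``the existence of a clause guarantees membership'' argument mentioned before the claim. The second ingredient is a bookkeeping fact read off from the construction: which clauses have which variable as their positive literal. Here $x_{j,k}$ is positive only in $(x_{j,k}\vee\overline{u_{j,3}})$ (for $k=1,2$) or $(x_{j,k}\vee\overline{u_{j,4}})$ (for $k=3,4$), since it occurs only negatively in $\varphi_1,\varphi_2$. Next, $u_{j,3}$ and $u_{j,4}$ are positive only in $(u_{j,3}\vee\overline{d_j})$ and $(u_{j,4}\vee\overline{d_j})$. The variable $d_j$ is positive only in $(d_j\vee\overline{u_{j,1}}\vee\overline{u_{j,2}})$. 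Finally, $u_{j,1}$ is positive only in $(u_{j,1}\vee\overline{x'_{j,1}}\vee\overline{x'_{j,2}})$, and similarly $u_{j,2}$ only in $(u_{j,2}\vee\overline{x'_{j,3}}\vee\overline{x'_{j,4}})$. I would also check that the not-yet-shown $\varphi_4$ adds no clause with these variables as heads, or else adapt accordingly.

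For (a), suppose $x_{j,k}\in S$. By (R2) we get $u_{j,3}\in S$ or $u_{j,4}\in S$, and in either case (R2) again gives $d_j\in S$. Then (R1) applied to $(u_{j,3}\vee\overline{d_j})$ and $(u_{j,4}\vee\overline{d_j})$ puts both $u_{j,3},u_{j,4}$ in $S$. Applying (R1) once more puts all four $x_{j,k}$ in $S$. Going backwards, (R2) applied to $d_j$ gives $u_{j,1},u_{j,2}\in S$, and (R2) applied to these gives $x'_{j,1},\dots,x'_{j,4}\in S$. So every variable of $\varphi_{3,j}$ lies in $S$.

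For (b), $d_j\in S$ gives $u_{j,3},u_{j,4}$ and then all $x_{j,k}$ by (R1), as above. Then (a) yields all $x'_{j,k}$, in particular at least one. For (c), if $u_{j,1}$ or $u_{j,2}$ is in $S$, then (R2) directly gives two $x'$-variables in $S$. If $u_{j,3}$ or $u_{j,4}$ is in $S$, then (R2) gives $d_j\in S$ and we conclude by (b).

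I expect the main obstacle to be the rigorous justification of (R2), namely that maximality forces the body of the last derivation clause to lie inside $S$ rather than merely inside its closure. A second obstacle is the exhaustive check that no other clause in $\varphi_\phi$ has $x_{j,k}$, $u_{j,k}$ or $d_j$ as its positive literal. I would settle both before the case analysis, which is then mechanical.
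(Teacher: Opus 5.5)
Your proof is correct and is essentially the paper's argument: trace the clauses of $\varphi_{3,j}$ forward using maximality, and backward using the fact that each of $x_{j,k}$, $u_{j,\ell}$, $d_j$ is the positive literal of exactly one clause. Your rules (R1) and (R2) make explicit what the paper uses implicitly; the justification of (R2) you give (the body variables lie in the closure of $S\setminus\{x\}$, hence in $S$) is sound, and the bookkeeping check goes through because $\varphi_4$ involves only $E_Y,Y,Y',V$ while $\varphi_1,\varphi_2$ contain the $x_{j,k}$ only negatively.
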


\begin{proof}
We show each property.
\begin{enumerate}[(a)]
    \item Without loss of generality, we assume that $S$ contains $x_{j,1}$.
    The positive literal $x_{j,1}$ appears only in the clause $(x_{j,1}\vee\overline{u_{j,3}})$ of $\varphi_{3,j}$, then $S$ contains $u_{j,3}$ to imply $x_{j,1}$ by variables in $S\setminus\{x_{j,1}\}$.
    By a similar argument and the clauses $(d_j\vee\overline{u_{j,3}}), (x_{j,2}\vee\overline{u_{j,3}}), (d_{j}\vee\overline{u_{j,1}}\vee\overline{u_{j,2}}), (u_{j,1}\vee\overline{x'_{j,1}}\vee\overline{x'_{j,2}})$, and $(u_{j,2}\vee\overline{x'_{j,3}}\vee\overline{x'_{j,4}})$, $S$ contains the variables $d_j, x_{j,2}, u_{j,1}, u_{j,2}, x'_{j,1}, x'_{j,2}, x'_{j,3}$, and $x'_{j,4}$.
    From the maximality of $S$ and the clause $(u_{j,4}\vee\overline{d_j})$, $S$ contains $u_{j,4}$.
    It means that $S$ contain $x_{j,3}$ and $x_{j,4}$ since $u_{j,4}$ implies both two variables from the clauses $(x_{j,3}\vee\overline{u_{j,4}}),(x_{j,4}\vee\overline{u_{j,4}})$.
    Therefore, $S$ contain all variables appearing in $\varphi_{3,j}$.
    \item We first show that $S$ contains all $x_{j,k}$.
    If $S$ contains $d_j$, then the variables $u_{j,3}$ and $u_{j,4}$ are contained from the maximality of $S$ and the clauses $(u_{j,3}\vee \overline{d_{j}}),(u_{j,4}\vee \overline{d_{j}})$.
    By a similar argument and the clauses $(x_{j,1}\vee \overline{u_{j,3}}),(x_{j,2}\vee \overline{u_{j,3}}),(x_{j,3}\vee \overline{u_{j,4}}),(x_{j,4}\vee \overline{u_{j,4}})$, $S$ contains $x_{j,1}, x_{j,2}, x_{j,3}$ and $x_{j,4}$.
    We next show that $S$ contains at least one $x'_{j,k}$.
    By a similar argument of (a) and the clauses $(d_{j}\vee\overline{u_{j,1}}\vee\overline{u_{j,2}}), (u_{j,1}\vee\overline{x'_{j,1}}\vee\overline{x'_{j,2}})$, and $(u_{j,2}\vee\overline{x'_{j,3}}\vee\overline{x'_{j,4}})$, $S$ contains the variables $u_{j,1}, u_{j,2}, x'_{j,1}, x'_{j,2}, x'_{j,3}$, and $x'_{j,4}$ if $S$ contains $d_j$.
    Hence, $S$ contains at least one $x'_{j,k}$.
    \item We consider two cases: (i) $S$ contains $u_{j,1}$ or $u_{j,2}$ (ii) $S$ contains $u_{j,3}$ or $u_{j,4}$.
    In Case~(i), $S$ contains the variables $x'_{j,1}, x'_{j,2}$ or $x'_{j,3}, x'_{j,4}$ from the clauses $(u_{j,1}\vee \overline{x'_{j,1}} \vee \overline{x'_{j,2}}), (u_{j,2}\vee \overline{x'_{j,3}} \vee \overline{x'_{j,4}})$.
    In Case~(ii), $S$ contains $d_j$ from the clauses $(u_{j,3}\vee \overline{d_{j}}), (u_{j,4}\vee \overline{d_{j}})$.
    Therefore, $S$ contains at least one $x'_{j,k}$ from (b).
\end{enumerate}

\end{proof}

The formula $\varphi_4$ is a conjunction of $\varphi_{4,i}$ corresponding to each $C_i \in C_{\phi}$.
It ensures that each clause in $\phi$ has at least one literal assigned the value 1.
Let $V=\{v_{i,j} : 1\le i\le m, 1\le j\le 4\}$ and $E_Y=\{e_i : 1\le i\le m\}$.
The formula $\varphi_{4}$ over $E_Y \cup Y \cup Y' \cup V$ is as follows:

\begin{align*}
    \varphi_4 = \bigwedge_{i=1}^{m} \varphi_{4,i}
    = &\bigwedge_{i=1}^{m} ((v_{i,1}\vee \overline{y_{i,1}'})(v_{i,1}\vee \overline{y_{i,2}'})(v_{i,2}\vee \overline{y_{i,3}'})
    (v_{i,2}\vee \overline{v_{i,1}})(e_i \vee \overline{v_{i,2}})(v_{i,3}\vee \overline{e_i})
    \\ &\wedge (v_{i,4}\vee \overline{v_{i,3}})(y_{i,1}\vee \overline{v_{i,4}})(y_{i,2}\vee \overline{v_{i,4}})(y_{i,3}\vee \overline{v_{i,3}})).
\end{align*}
From the construction of $\varphi_{4,i}$, we can get the following observation.

\begin{claim}
\label{clm:y_gadget_property}
Let $S$ be an arbitrary non-empty maximal self-implicating set.
Then, the following properties hold:
\begin{enumerate}[(a)]
    \item If $S$ contains at least one $y_{i,k}$, then $S$ contains $e_i$ and at least one $y'_{i,k}\ (1\le k\le 3)$,
    \item If $S$ contains at least one $v_{i,\ell}\ (1\le \ell\le 4)$, then $S$ contains $e_i$,
    \item If $S$ contains at least one $y'_{i,k}$, then $S$ contains $e_i$,
    \item If $S$ contains $e_i$, then $S$ contains at least one $y'_{i,k}\ (1\le k\le 3)$.
\end{enumerate}
\end{claim}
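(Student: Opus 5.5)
The plan mirrors the proof of Claim~\ref{clm:x_gadget_property}. Two principles do all the work. \emph{Support}: if $z\in S$, then $z$ is implied by $S\setminus\{z\}$, so some implication clause with positive literal $z$ has all its negative variables in $S$. \emph{Closure}: by maximality, if all negative variables of an implication clause $(w\vee\overline{z_1}\vee\cdots)$ lie in $S$, then $w\in S$. We then trace the chain inside $\varphi_{4,i}$ in both directions:
\[y'_{i,1},y'_{i,2}\to v_{i,1}\to v_{i,2},\qquad y'_{i,3}\to v_{i,2},\qquad v_{i,2}\to e_i\to v_{i,3}\to v_{i,4},\qquad v_{i,3}\to y_{i,3},\qquad v_{i,4}\to y_{i,1},y_{i,2}.\]

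For (c), if some $y'_{i,k}\in S$, we apply closure forward. Both $(v_{i,1}\vee\overline{y'_{i,1}})$ and $(v_{i,2}\vee\overline{y'_{i,3}})$ are two-literal clauses, so $y'_{i,1}$ or $y'_{i,2}$ forces $v_{i,1}$ and then $v_{i,2}$, while $y'_{i,3}$ forces $v_{i,2}$ directly. Then $(e_i\vee\overline{v_{i,2}})$ forces $e_i$. For (b), the same forward closure handles $v_{i,1}$ and $v_{i,2}$. For $v_{i,3}$ or $v_{i,4}$ we instead use support backward: $v_{i,3}$ occurs positively only in $(v_{i,3}\vee\overline{e_i})$, so $e_i\in S$; and $v_{i,4}$ occurs positively only in $(v_{i,4}\vee\overline{v_{i,3}})$, so $v_{i,3}\in S$, which gives $e_i$ as before.

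For (d), we use support backward from $e_i$. Its only positive occurrence is $(e_i\vee\overline{v_{i,2}})$, so $v_{i,2}\in S$. Now $v_{i,2}$ occurs positively in $(v_{i,2}\vee\overline{y'_{i,3}})$ and $(v_{i,2}\vee\overline{v_{i,1}})$, so either $y'_{i,3}\in S$ or $v_{i,1}\in S$. In the second case the only positive occurrences of $v_{i,1}$ are $(v_{i,1}\vee\overline{y'_{i,1}})$ and $(v_{i,1}\vee\overline{y'_{i,2}})$, so $y'_{i,1}$ or $y'_{i,2}$ lies in $S$.

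For (a), suppose some $y_{i,k}\in S$. By construction, $y_{i,k}$ occurs positively only in $\varphi_{4,i}$: in $(y_{i,1}\vee\overline{v_{i,4}})$, $(y_{i,2}\vee\overline{v_{i,4}})$ and $(y_{i,3}\vee\overline{v_{i,3}})$. In $\varphi_2$ it appears only negatively, in $(\overline{y_{i,k}}\vee p_{i,\cdot})$. Support therefore puts $v_{i,4}$ or $v_{i,3}$ in $S$, so (b) gives $e_i\in S$, and (d) gives some $y'_{i,k}\in S$.

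The main point to verify carefully is the bookkeeping of which clauses in the whole formula $\varphi_\phi$ contain each variable positively. The support arguments are valid only if no clause outside $\varphi_{4,i}$ has $e_i$, $v_{i,\ell}$ or $y_{i,k}$ as its positive literal. Checking $\varphi_1$ (all negative), $\varphi_2$ (where only $p$, $q$, $x'$ and $y'$ occur positively) and $\varphi_3$ settles this. The $y'$ variables do occur positively in $\varphi_2$, but our arguments never need support for a $y'$, only closure from it, so that causes no difficulty.
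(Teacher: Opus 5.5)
Your proposal is correct and follows the paper's proof essentially step for step: forward closure for (c) and for $v_{i,1},v_{i,2}$ in (b), and backward support through the unique positive occurrences of $v_{i,3}$, $v_{i,4}$, $e_i$, $v_{i,2}$, $v_{i,1}$ and $y_{i,k}$. The differences are cosmetic. You obtain (a) by citing (b) and (d) rather than re-tracing the chain. You treat $y_{i,3}$, which is supported by $v_{i,3}$ rather than $v_{i,4}$, explicitly where the paper says ``without loss of generality''. You also make explicit the check that $e_i$, $v_{i,\ell}$ and $y_{i,k}$ have no positive occurrences outside $\varphi_{4,i}$.
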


\begin{proof}
We show each property.
\begin{enumerate}[(a)]
    \item Without loss of generality, we assume that $S$ contains $y_{i,1}$.
    The positive literal $y_{i,1}$ appears only in the clause $(y_{i,1}\vee\overline{v_{i,4}})$ of $\varphi_{4,i}$, then $S$ contains $v_{i,4}$ to imply $y_{i,1}$ by variables in $S\setminus\{y_{i,1}\}$.
    By a similar argument and the clauses $(v_{i,4}\vee\overline{v_{i,3}}), (v_{i,3}\vee\overline{e_i})$, and $(e_i\vee\overline{v_{i,2}})$, $S$ contains the variables $v_{i,4}, v_{i,3}, e_i$, and $v_{i,2}$.
    The variable $v_{i,2}$ appears in the clauses $(v_{i,2}\vee\overline{y'_{i,3}})$ and $(v_{i,2}\vee\overline{v_{i,1}})$, then $S$ contains $y'_{i,3}$ or $v_{i,1}$.
    If $S$ contains $y'_{i,3}$, then the proof is done.
    Otherwise, $S$ contains $v_{i,1}$. 
    In this case, $S$ contains $y'_{i,1}$ or $y'_{i,2}$ to imply $v_{i,1}$ by variables in $S\setminus\{v_{i,1}\}$.
    Therefore, $S$ contains at least one $y'_{i,k}\ (1\le k\le 3)$.
    \item We consider two cases: (i) $S$ contains $v_{i,1}$ or $v_{i,2}$ (ii) $S$ contains $v_{i,3}$ or $v_{i,4}$.
    \subparagraph{Case~(i)}
    If $S$ contains $v_{i,2}$, then $S$ contains $e_i$ from the maximality of $S$ and the clause $(e_i\vee\overline{v_{i,2}})$.
    By a similar argument and the clauses $(v_{i,2}\vee\overline{v_{i,1}}), (e_i\vee\overline{v_{i,2}})$, if $S$ contains $v_{i,1}$, then $v_{i,2}$ and $e_i$ are contained in $S$.
    \subparagraph{Case~(ii)}
    The set $S$ contains $e_j$ from the clause $(v_{i,3}\vee\overline{e_i})$ if $S$ contains $v_{i,3}$.
    If $S$ contains $v_{i,4}$, then $S$ contains $v_{i,3}$ from the clause $(v_{i,4}\vee\overline{v_{i,3}})$.
    This means that $S$ contains $e_j$ by a similar argument.
    \item Without loss of generality, we assume that $S$ contains $y'_{i,1}$.
    Then, from the maximality of $S$ and the clauses $(v_{i,1}\vee\overline{y'_{i,1}}), (v_{i,2}\vee\overline{v_{i,1}})$, $S$ contains $v_{i,1}$ and $v_{i,2}$.
    By a similar argument and the clause $(e_i\vee\overline{v_i,2})$, $S$ contains $e_i$.
    \item By a similar argument to (a) and the clause $(e_i\vee\overline{v_{i,2}})$, $S$ contains $v_{i,2}$.
    The variable $v_{i,2}$ appears in the clauses $(v_{i,2}\vee\overline{y'_{i,3}})$ and $(v_{i,2}\vee\overline{v_{i,1}})$, then $S$ contains $y'_{i,3}$ or $v_{i,1}$.
    If $S$ contains $y'_{i,3}$, then the proof is done.
    Otherwise, $S$ contains $v_{i,1}$. 
    In this case, $S$ contains $y'_{i,1}$ or $y'_{i,2}$ to imply $v_{i,1}$ by variables in $S\setminus\{v_{i,1}\}$.
    Therefore, $S$ contains at least one $y'_{i,k}\ (1\le k\le 3)$.
\end{enumerate}

\end{proof}

Clearly, $\varphi_\phi$ is a 3-Horn formula.
The number of occurrences of each $d_j \in D_X$, $x_{j,k}\in X_\phi$, $q_{i,j}\in Q$, and $v_{i,j} \in V$ is exactly three, and other variables appear at most three times.
Hence, $\varphi_\phi$ is an instance of {\sc DisConn 3-Horn-3}.
We show the following two claims about maximal self-implicating sets of $\varphi_\phi$.

\begin{claim}
\label{clm:instance_property1}
Let $S$ be an arbitrary non-empty maximal self-implicating set of $\varphi_\phi$.
Then, $S$ contains all $e_i \in E_Y$ and all $p_{i,j}\in P$.
\end{claim}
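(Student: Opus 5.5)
The plan is to use only two facts about a non-empty maximal self-implicating set $S$. First, \emph{backward closure}: if $z\in S$, then some clause containing $z$ positively has all its negative variables in $S$. If $z$ occurs positively in only one clause, all negative variables of that clause lie in $S$. Second, \emph{forward closure} (maximality): if every negative variable of an implication clause lies in $S$, then its positive variable lies in $S$. The proof has three steps: show $S$ contains some $e_i$, propagate around the cyclic structure of $\varphi_2$ to get every $e_i$, and then derive every $p_{i,j}$ by forward closure.

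\textbf{Step 1: some $e_i$ lies in $S$.} Pick any $z\in S$ and do a case analysis on the type of $z$, reducing each case to one already handled.
\begin{itemize}
\item If $z$ is a $y_{i,k}$, a $v_{i,\ell}$ or a $y'_{i,k}$, then Claim~\ref{clm:y_gadget_property}(a),(b),(c) gives $e_i\in S$ directly.
\item If $z=p_{i,j}$, it occurs positively only in $(\overline{y_{i,k}}\vee p_{i,j})$, so $y_{i,k}\in S$, and the previous case applies.
\item If $z=q_{i,j}$, it occurs positively only in $(\overline{p_{i,j}}\vee\overline{x_{j,a}}\vee q_{i,j})$, so $p_{i,j}\in S$.
\item If $z=x'_{j,a}$, it occurs positively only in $(\overline{q_{i,j}}\vee x'_{j,a})$, where $C_i$ is the clause holding that occurrence, so $q_{i,j}\in S$.
\item If $z$ is an $x_{j,k}$, a $d_j$ or a $u_{j,k}$, then Claim~\ref{clm:x_gadget_property}(a),(b),(c) gives some $x'_{j,k}\in S$, which reduces to the previous case.
\end{itemize}

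\textbf{Step 2: all $e_i$ lie in $S$.} Suppose $e_i\in S$. By Claim~\ref{clm:y_gadget_property}(d), some $y'_{i,k}\in S$. In $\varphi_\phi$, the variable $y'_{i,k}$ occurs positively only in $(\overline{q_{i',j}}\vee y'_{i,k})$ from $\varphi_{2,i'}$, where $i'=i-1$ (indices cyclic modulo $m$). Backward closure then gives a chain of forced memberships: $q_{i',j}\in S$, then $p_{i',j}\in S$, then $y_{i',k}\in S$. Claim~\ref{clm:y_gadget_property}(a) now yields $e_{i'}\in S$. Iterating this around the cycle $i\mapsto i-1$ shows that $S$ contains every $e_i$.

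\textbf{Step 3: all $p_{i,j}$ lie in $S$.} This uses forward closure only. From $e_i\in S$ and the clause $(v_{i,3}\vee\overline{e_i})$ we get $v_{i,3}\in S$. Then:
\begin{itemize}
\item $(v_{i,4}\vee\overline{v_{i,3}})$ gives $v_{i,4}\in S$;
\item $(y_{i,3}\vee\overline{v_{i,3}})$, $(y_{i,1}\vee\overline{v_{i,4}})$ and $(y_{i,2}\vee\overline{v_{i,4}})$ give $y_{i,1},y_{i,2},y_{i,3}\in S$;
\item the clauses $(\overline{y_{i,k}}\vee p_{i,\cdot})$ of $\varphi_{2,i}$ put the three $p$-variables of clause $C_i$ into $S$.
\end{itemize}
Ranging over all $i$ gives every $p_{i,j}$ that occurs in $\varphi_\phi$.

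The main obstacle is the bookkeeping in Step 2. It requires that each of $y'_{i,k}$, $q_{i,j}$ and $p_{i,j}$ has a \emph{unique} positive occurrence in the whole formula $\varphi_\phi$, not just in one gadget, and that the index shift $i\mapsto i-1$ in $\varphi_2$ really closes into a single cycle through all clause indices. I would verify this by listing every positive occurrence of these variables across $\varphi_1$ through $\varphi_4$.
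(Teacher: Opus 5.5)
Your proposal is correct and follows essentially the same route as the paper. Both arguments first obtain some $e_i\in S$ by a case analysis on the type of a variable in $S$; you argue directly via the chain $x,d,u\to x'\to q\to p\to y\to e$, where the paper argues by contradiction. Both then propagate around the cyclic shift $i\mapsto i-1$ via $y'_{i,k}\to q_{i-1,\cdot}\to p_{i-1,\cdot}\to y_{i-1,\cdot}\to e_{i-1}$, and finish by forward closure from $e_i$ to the $p$-variables. Your write-up is slightly cleaner in two details: you correctly invoke Claim~\ref{clm:y_gadget_property}(c) for the $y'$ case, and you note that only the $p_{i,j}$ actually occurring in $\varphi_\phi$ can be forced into $S$.
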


\begin{proof}
To prove this claim, we show that $S$ contains all $e_i$.
If $S$ contains all $e_i$, then $S$ contains all $v_{i,3}$ from the maximality of $S$ and the clause $(v_{i,3}\vee\overline{e_i})$ of $\varphi_{4,i}$.
The variables $v_{i,4}$ and $y_{i,3}$ are contained from the maximality of $S$ and the clauses $(v_{i,4}\vee\overline{v_{i,3}}),(y_{i,3}\vee\overline{v_{i,3}})$.
Then, $S$ contains $y_{i,1}$ and $y_{i,2}$ which are implied by $v_{i,4}$ in the clauses $(y_{i,1}\vee\overline{v_{i,4}}),(y_{i,2}\vee\overline{v_{i,4}})$.
Now, $S$ contains the variables $y_{i,1}, y_{i,2}$, and $y_{i,3}$.
From the clauses $(\overline{y_{i,1}}\vee p_{i,j}),(\overline{y_{i,2}}\vee p_{i,k}),(\overline{y_{i,3}}\vee p_{i,\ell})$ and the maximality of $S$, the variables $p_{i,j}\in P$ are contained in $S$.
We prove that $S$ contains all $e_i$ in two steps: (i) $S$ has at least one $e_i$ (ii) If $S$ has at least one $e_i$, then $S$ contains all $e_i$.

We first show (i).
For the sake of contradiction, we assume that there exists a non-empty maximal self-implicating set $S$ that contains no element $e_i\in E_Y$.
Then, we consider cases depending on which variable sets of $\varphi_\phi$ the non-empty maximal self-implicating set $S$ contains at least one variable from.
The formula $\varphi_\phi$ has the set of variables $X_\phi, X'_\phi, Y, Y', P, Q, U, V, D_X$, and $E_Y$.
The formula $\varphi_{4,i}$ has the set of variables $E_Y, Y, Y'$, and $V$.
The formula $\varphi_{3,j}$ has the set of variables $D_X, X_\phi, X'_\phi$, and $U$.
The remaining sets of variables in $\varphi_\phi$ are $P$ and $Q$.
Thus, we consider the following four cases:
\begin{enumerate}
    \item $S$ has at least one variable except for $e_{i}$, appearing in $\varphi_{4,i}$,
    \item $S$ has at least one variable appearing in $\varphi_{3,j}$,
    \item $S$ has at least one variable in $P$,
    \item $S$ has at least one variable in $Q$.
\end{enumerate}
\subparagraph{Case 1}
The formula $\varphi_{4,i}$ has the set of variables $E_Y, Y, Y'$, and $V$.
If $S$ contains at least one variable in $Y$, then $S$ contains some $e_i\in E_Y$ from the property~(a) in Claim~\ref{clm:y_gadget_property}.
If $S$ contains at least one variable in $Y'$, then $S$ contains some $e_i\in E_Y$ from the property~(a) in Claim~\ref{clm:y_gadget_property}.
Also, if $S$ contains at least one variable in $V$, then $S$ contains some $e_i\in E_Y$ from the property~(b) in Claim~\ref{clm:y_gadget_property}.
This contradicts the assumption.

\subparagraph{Case 2}
The formula $\varphi_{3,j}$ has the set of variables $D_X, X_\phi, X'_\phi$, and $U$.
If $S$ contains any variable appearing in $\varphi_{3,j}$, then $S$ contains at least one $x'_{j,k}\ (1\le k\le 4)$ from the properties~(a), (b), and (c) in Claim~\ref{clm:x_gadget_property}.
The positive literal $x'_{j,k}$ appears only in $(\overline{q_{i,j}}\vee x'_{j,k})$ of $\varphi_{2,i}$, then $S$ contains $q_{i,j}$ to imply $x'_{j,k}$ by variables in $S\setminus\{x'_{j,k}\}$.
Therefore, since positive literal $q_{i,j}$ appears only in $(\overline{p_{i,j}}\vee \overline{x_{j,k}}\vee q_{i,j})$ of $\varphi_{2,j}$, $S$ contains $p_{i,j}$ and $x_{j,k}$.
The clause $(\overline{y_{i, a}}\vee p_{i,j})$ of $\varphi_{2,i}$ ensures that $S$ contains $y_{i,a}$ to imply $p_{i,j}$ by variables in $S\setminus \{p_{i,j}\}$.
Then, $S$ contains $e_{i}$ from the property~(a) in Claim~\ref{clm:y_gadget_property}.
This contradicts the assumption.

\subparagraph{Case 3 and Case 4}
We can consider the latter two cases as subcases of Case~2.
Therefore, we can show both cases by using the same discussion as Case~2.\\

We next show (ii).
We assume that there exists a non-empty maximal self-implicating set $S$ which does not contain some $e_{i}\in E_Y$.
From the above discussion, $S$ contains at least one $e_{i}\in E_Y$.
Then, $S$ contains at least one $y'_{i, j}$ from the property (d) in Claim~\ref{clm:y_gadget_property}.
Let $\ell=((i+m-2)\bmod m)+1$ and note that $\ell=m$ if $i=1$ and $\ell=i-1$ otherwise. 
From the clauses $(\overline{q_{\ell, j}} \vee y'_{i ,1}),(\overline{q_{\ell, k}} \vee y'_{i, 2}),(\overline{q_{\ell,a}} \vee y'_{i, 3})$ of $\varphi_{4,\ell}$, $S$ contains $q_{\ell,j}$ to imply $y'_{i,j}$ by variables in $S\setminus\{y'_{i,j}\}$.
Therefore, $S$ contains $e_{\ell}$ by a similar argument of Case~4 in (i).
By repeating these arguments, all variables in $E_Y$ are contained in $S$.
This contradicts the assumption.
Hence, all $e_{i}\in E_Y$ are contained in $S$, and this completes the proof.
\end{proof}

\begin{claim}
\label{clm:instance_property2}
Let $S$ be an arbitrary non-empty maximal self-implicating set of $\varphi_\phi$.
Then, for each $j\ (1\le j\le n)$, $S$ contains at least one $q_{i,j}\in Q \ (1\le i\le m)$ if and only if $S$ contains all variables appearing in $\varphi_{3,j}$.
\end{claim}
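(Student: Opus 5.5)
The claim is a biconditional, and I will prove the two directions separately by local propagation through the clauses that contain $q_{i,j}$. In each step I use either the maximality of $S$ (if all negative literals of an implication clause lie in $S$, its positive literal lies in $S$) or the self-implication of $S$ (every $z\in S$ must be the positive literal of some clause whose negative variables all lie in $S\setminus\{z\}$). Throughout, $x_{j,a}$ is the occurrence of $x_j$ in $C_i$ that is attached to $p_{i,j}$ and $q_{i,j}$ in $\varphi_{2,i}$.

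For the ``only if'' direction, suppose $q_{i,j}\in S$. The positive literal $q_{i,j}$ occurs only in the clause $(\overline{p_{i,j}}\vee\overline{x_{j,a}}\vee q_{i,j})$. Self-implication therefore forces $p_{i,j},x_{j,a}\in S$. Claim~\ref{clm:x_gadget_property}(a) then gives that $S$ contains all variables appearing in $\varphi_{3,j}$. A second route is also available: $q_{i,j}\in S$ gives $x'_{j,a}\in S$ through the clause $(\overline{q_{i,j}}\vee x'_{j,a})$ and maximality, and Claim~\ref{clm:x_gadget_property}(c) and (a) apply from there. The first route is shorter.

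For the ``if'' direction, suppose $S$ contains every variable of $\varphi_{3,j}$, in particular every $x_{j,k}$ for $1\le k\le 4$. By Claim~\ref{clm:instance_property1}, $S$ also contains every $p_{i,j}\in P$. Pick any clause $C_i$ of $\phi$ containing $x_j$; one exists because $x_j$ occurs four times. Then both negative variables of $(\overline{p_{i,j}}\vee\overline{x_{j,a}}\vee q_{i,j})$ lie in $S$, so maximality of $S$ gives $q_{i,j}\in S$.

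The only delicate point is bookkeeping. I have to confirm that $q_{i,j}$ occurs positively only in that one clause of $\varphi_{2,i}$ (it does not appear positively in $\varphi_3$ or $\varphi_4$). I also have to make sure the index pairing between the $k$-th occurrence $x_{j,k}$ and the clause index $i$ is used consistently. In the ``if'' direction the relevant $p_{i,j}$ must exist, meaning $x_j\in C_i$, so I read the indices of $P$ and $Q$ as ranging over the pairs with $x_j\in C_i$. Beyond this index care, the argument is direct.
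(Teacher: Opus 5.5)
Your proof is correct and is essentially the paper's argument. The paper's direction (i) is your first route; it gets $p_{i,j}\in S$ from Claim~\ref{clm:instance_property1} rather than from the self-implication of $q_{i,j}$, which makes no difference, and your maximality step through $(\overline{p_{i,j}}\vee\overline{x_{j,a}}\vee q_{i,j})$ is exactly the opening step of the paper's direction (ii). The paper's subsequent case analysis over $d_j$, $u_{j,k}$ and $x'_{j,k}$ proves the stronger all-or-nothing fact that $S$ contains no variable of $\varphi_{3,j}$ when it contains no $q_{i,j}$, which the claim as stated does not require.
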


\begin{proof}
We show the following two propositions, respectively.
\begin{enumerate}[(i)]
    \item For each $j$, if $S$ contains at least one $q_{i,j}$, then $S$ contains all variables appearing in $\varphi_{3,j}$.
    \item For each $j$, if $S$ does not contain any $q_{i,j}$ for all $i$, then $S$ does not contain any variables appearing in $\varphi_{3,j}$.
\end{enumerate}
\subparagraph{Case (i)} From Claim~\ref{clm:instance_property1}, $S$ contains all $e_{i}$ and all $p_{i,j}$.
Since $\varphi_{2,i}$ has the clause $(\overline{p_{i,j}}\vee \overline{x_{j,a}}\vee q_{i,j})$ and $S$ contains at least one $q_{i,j}$ and all $p_{i,j}$, we can show that $S$ contains $x_{j,a}$ to imply $q_{i,j}\in S$ by some variables in $S\setminus\{q_{i,j}\}$. 
Therefore, $S$ contains all variables which appear in $\varphi_{3,j}$ since the property~(a) in Claim~\ref{clm:x_gadget_property} holds.

\subparagraph{Case (ii)} From Claim~\ref{clm:instance_property1}, $S$ contains all $e_i$ and all $p_{i,j}$.
For any $j$ satisfying that $q_{i,j}\notin S$ for all $i$, $S$ does not contain some $x_{j,a}$ from the clause $(\overline{p_{i,j}}\vee \overline{x_{j,a}}\vee q_{i,j})$ in $\varphi_{2,j}$.
We now assume that $S$ contains at least one variable, other than $x_{j,a}\notin S$, that appears in $\varphi_{3,j}$.
The formula $\varphi_{3,j}$ has the sets of variables $D_X, X_\phi, X'_\phi$, and $U$.
Thus, we consider the following four cases depending on which variable sets in $\varphi_{3,j}$ the non-empty maximal self-implicating set $S$ contains:
\begin{enumerate}[(a)]
\item $S$ has at least one variable in $X_\phi\setminus\{x_{j,a}\}$,
\item $S$ has at least one variable in $D_X$,
\item $S$ has at least one in $U$,
\item $S$ has at least one in $X'_\phi$.
\end{enumerate}
\begin{enumerate}[(a)]
\item If $S$ contains some $x_{j,\ell}\in X_\phi$ other than $x_{j,a}$, then $S$ contains all variables appearing in $\varphi_{3,j}$ from the property~(a) in Claim~\ref{clm:x_gadget_property}.
This contradicts the fact that $S$ does not contain $x_{j,a}$.
\item If $S$ contains $d_j\in D_X$, then $S$ contains some $x_{j,k}$ from the property~(b) in Claim~\ref{clm:x_gadget_property}.
This means that $S$ contains all variables appearing in $\varphi_{3,j}$ from the property~(a) in Claim~\ref{clm:x_gadget_property}.
This contradicts the fact that $S$ does not contain $x_{j,a}$.
\item In this case, we consider two cases depending on which variables in $U$ are contained in $S$: (1) $S$ contains $u_{j,3}$ or $u_{j,4}$ (2) $S$ contains $u_{j,1}$ or $u_{j,2}$.

\subparagraph{Case (1)} 
If $S$ contains $u_{j,3}$ or $u_{j,4}$, then $S$ contains $d_j$ to imply $u_{j,3}, u_{j,4}$ from the clauses $(u_{j,3}\vee\overline{d_j}), (u_{j,4}\vee\overline{d_j})$.
Therefore, we can consider this case as a subcase of Case~(b), and this leads to a contradiction by a similar argument to Case~(b).

\subparagraph{Case (2)}
In this case, $S$ contains at least one $x'_{j,k}$ to imply $u_{j,1}, u_{j,2}$ from the property~(c) in Claim~\ref{clm:x_gadget_property}.
Then, $S$ contains $q_{\ell,j}$ to imply $x'_{j,k}\in S$ since positive literal $x'_{j,k}$ appears only in $(\overline{q_{\ell,j}}\vee x'_{j,k})$ of $\varphi_{2,\ell}$.
By similar argument and the clause $(\overline{p_{\ell, j}}\vee\overline{x_{j,k'}}\vee q_{\ell, j})$ of $\varphi_{2,\ell}$, $S$ contains $x_{j,k'}$.
This means that $S$ contains all variables in $\varphi_{3,j}$ from the property~(a) in Claim~\ref{clm:x_gadget_property}.
This contradicts the fact that $S$ does not contain $x_{j,a}$.
\item We can consider this case as a subcase of Case~(c).
Therefore, we can lead the contradiction by a similar argument of Case~(c).
\end{enumerate}
\end{proof}

We now show two auxiliary Lemmas to prove the $\mathsf{NP}$-hardness of {\sc DisConn 3-Horn-3}.

\begin{lemma}\label{lem:nae_to_3-horn-3}
    $\varphi_\phi$ has a locally minimal non-zero satisfying assignment if $\phi$ has an NAE-satisfying assignment.
\end{lemma}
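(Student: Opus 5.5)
The plan is to build, from an NAE-satisfying assignment $\beta$ of $\phi$, an explicit assignment $\alpha$ of $\varphi_\phi$. I will check that $\alpha$ satisfies $\varphi_\phi$ and that every variable set to $1$ by $\alpha$ has a critical clause at $\alpha$, namely an implication clause whose positive literal is that variable and whose negative literals are all set to $1$. Then flipping any $1$ to $0$ falsifies $\varphi_\phi$, so $\alpha$ is locally minimal. It is non-zero because, as described next, all $e_i$ are set to $1$. Equivalently, the set of $1$-variables is a non-empty maximal self-implicating set containing no restraint set, so Corollary~\ref{col:schwerd} could also be invoked; note that $\varphi_\phi$ has no unit clauses.

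Let $T=\{j : \beta(x_j)=1\}$. Define $\alpha$ as follows.
\begin{enumerate}[(1)]
\item For $j\in T$, every variable of $\varphi_{3,j}$ ($x_{j,k},x'_{j,k},u_{j,k},d_j$) is $1$. For $j\notin T$, all of them are $0$.
\item All $p_{i,j}$, all $y_{i,t}$, all $e_i$, and $v_{i,2},v_{i,3},v_{i,4}$ are $1$.
\item $q_{i,j}=1$ if and only if $j\in T$.
\item $y'_{(i\bmod m)+1,t}=1$ if and only if the $t$-th variable of $C_i$ lies in $T$.
\item $v_{i,1}=1$ if and only if $y'_{i,1}=1$ or $y'_{i,2}=1$.
\end{enumerate}

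Satisfaction is checked block by block. Each clause of $\varphi_1$ is satisfied because every $C_i$ has a false variable under $\beta$, so the corresponding $x_{j,k}$ is $0$. In $\varphi_2$, the clause $(\overline{p}\vee\overline{x_{j,a}}\vee q_{i,j})$ is satisfied because $q_{i,j}=x_{j,a}$ in value. The clauses $(\overline{q_{i,j}}\vee x'_{j,a})$ and $(\overline{q_{i,j}}\vee y'_{\cdot,\cdot})$ are satisfied by items (1) and (4). The $\varphi_{3,j}$ blocks are constant ($0$ or $1$) and hence satisfied. In $\varphi_4$, everything except $(v_{i,1}\vee\overline{y'})$ has its positive literal equal to $1$, and that clause is handled by item (5).

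Criticality is checked along the implication chains:
\begin{enumerate}[(1)]
\item $e_i$ is critical via $(e_i\vee\overline{v_{i,2}})$.
\item $v_{i,3}$ is critical via $(v_{i,3}\vee\overline{e_i})$.
\item $v_{i,4}$ and $y_{i,3}$ are critical via $v_{i,3}$.
\item $y_{i,1}$ and $y_{i,2}$ are critical via $v_{i,4}$.
\item $p_{i,j}$ is critical via its $y$-clause.
\item A $q_{i,j}$ with value $1$ is critical via $(\overline{p}\vee\overline{x}\vee q)$.
\item $x'_{j,k}$ and the $y'$ variables set to $1$ are critical via their unique $q$-clause.
\item $u_{j,1}$ and $u_{j,2}$ are critical via the $x'$ variables.
\item $d_j$ is critical via $u_{j,1}$ and $u_{j,2}$.
\item $u_{j,3}$ and $u_{j,4}$ are critical via $d_j$.
\item $x_{j,k}$ is critical via $u_{j,3}$ or $u_{j,4}$.
\end{enumerate}

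The step I expect to be the main obstacle is $v_{i,2}$, together with the cyclic indexing of $y'$. The variable $v_{i,2}$ needs either $y'_{i,3}=1$ or $v_{i,1}=1$, with $v_{i,1}$ in turn critical via some $y'_{i,1}$ or $y'_{i,2}$ equal to $1$. Here the \emph{other} half of the NAE condition is used: the clause $C_{i-1}$ (cyclically) contains a true variable, so some $y'_{i,t}=1$. I will also double-check that the ordering of literals $x_{j,a},x_{k,b},x_{\ell,c}$ in $\varphi_{2,i}$ matches the $t$-index of $y'$, and that each $x'_{j,k}$ occurs positively in exactly one clause, so that item (7) is valid.
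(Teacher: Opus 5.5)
Your proposal is correct and is essentially the paper's proof. The paper builds the same assignment: every variable of $\varphi_{3,j}$, each $q_{i,j}$, and the $y'$ sharing a clause with $q_{i,j}$ copy the value of $x_j$; $E_Y$, $Y$, $P$ and $v_{i,2},v_{i,3},v_{i,4}$ are set to $1$; and $v_{i,1}$ is $1$ iff $y'_{i,1}$ or $y'_{i,2}$ is. It then argues local minimality exactly as you do, by exhibiting for each true variable an implication clause whose other variables are all true. Your explicit checks make precise the two places where the two halves of the NAE condition are used, which the paper leaves implicit: $\varphi_1$ holds because each $C_i$ has a false literal, and $v_{i,2}$ is critical because the cyclically preceding clause $C_{i-1}$ has a true literal.
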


\begin{proof}
    Let $\alpha$ be an NAE-satisfying assignment of $\phi$ and let us construct a satisfying assignment $\alpha'$ of $\varphi_\phi$ from $\alpha$.
    The formula $\varphi_{3,j}$ has the sets of variables $D_X, X_\phi, X'_\phi$ and $U$.
    The formula $\varphi_{4,i}$ has the sets of variables $E_Y, Y, Y'$ and $V$.
    The remaining sets of variables in $\varphi_\phi$ are $P$ and $Q$.
    Then, for each $j$, we set the value of all variables appearing in $\varphi_{3,j}$ to the same value of $\alpha(x_j)$ in $\alpha'$.
    For each $j$, we also set the value of each variable $q_{i,j}\in Q$ and each variable $y'_{i,k}\in Y'$ to the same value of $\alpha(x_j)$, where $y'_{i,k}$ shares the clause $(\overline{q_{i,j}}\vee y'_{i,k})$ with $q_{i,j}$.
    Moreover, we set the value of all variables in $E_Y, Y$, and $P$ to 1 in $\alpha'$.
    For each variable in $V$, we consider the following assignment.
    We set all $v_{i,2}, v_{i,3}$, and $v_{i,4}$ to the value 1 in $\alpha'$.
    For each $v_{i,1}$, we set it to the value 1 if $y'_{i,1}$ or $y'_{i,2}$ is assigned the value 1, and 0 otherwise.
    Then, $\alpha'$ is a satisfying assignment of $\varphi_\phi$.
    For each variable, there exists an implication clause such that if the variable is assigned the value 1 and appears as a positive literal, then all remaining variables in the clause are also assigned the value 1.
    By the definition of the implication clause, any assignment obtained by flipping any 1's bit in $\alpha'$ is not a satisfying assignment of $\varphi_\phi$.
    This means that $\alpha'$ is a locally minimal non-zero satisfying assignment of $\varphi_\phi$.
\end{proof}

\begin{lemma}\label{lem:3-horn-3_to_nae}
    $\phi$ has an NAE-satisfying assignment if $\varphi_\phi$ has a locally minimal non-zero satisfying assignment.
\end{lemma}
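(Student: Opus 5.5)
We work with self-implicating sets instead of assignments directly. Since $\varphi_\phi$ contains no unit clauses, Lemma~\ref{lem:sch2} and Corollary~\ref{col:local_minimal_makino} let us take a locally minimal non-zero satisfying assignment $\beta$ of $\varphi_\phi$ and let $S$ be the set of variables that $\beta$ sets to $1$. Then $S$ is a non-empty maximal self-implicating set containing no restraint set. From $S$ we define an assignment $\alpha$ of $\phi$ by $\alpha(x_j)=1$ if and only if $S$ contains all variables of $\varphi_{3,j}$. By Claim~\ref{clm:instance_property2}, for each $j$ the variables of $\varphi_{3,j}$ lie either all in $S$ or all outside $S$. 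In particular the copies $x_{j,1},\dots,x_{j,4}$ get the same value under $\beta$, namely $\alpha(x_j)$, so $\alpha$ is well defined and agrees with $\beta$ on $X_\phi$.

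\emph{At least one $0$ per clause.} Since $\beta$ satisfies $\varphi_1$, each clause $C_i$ has some copy $x_{j,k}\in C_i$ with $\beta(x_{j,k})=0$. Hence $\alpha(x_j)=0$.

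\emph{At least one $1$ per clause.} By Claim~\ref{clm:instance_property1}, $S$ contains every $e_i$. Fix a clause $C_i$. Then $e_{i'}\in S$ for $i'=(i\bmod m)+1$. By Claim~\ref{clm:y_gadget_property}(d), $S$ contains some $y'_{i',k}$. In the whole formula, the only clause with $y'_{i',k}$ as a positive literal is $(\overline{q_{i,j}}\vee y'_{i',k})$ in $\varphi_{2,i}$, where $x_j$ is the $k$-th variable of $C_i$. Because $y'_{i',k}$ must be implied by $S\setminus\{y'_{i',k}\}$, we get $q_{i,j}\in S$. Claim~\ref{clm:instance_property2} then puts all of $\varphi_{3,j}$ in $S$, so $\alpha(x_j)=1$. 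Thus every clause gets both values and $\alpha$ is NAE-satisfying.

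The main obstacle is the index bookkeeping in the cyclic link: $q_{i,\cdot}$ of clause $C_i$ points to $y'_{(i\bmod m)+1,\cdot}$, so the $1$-witness for $C_i$ must come from $e_{(i\bmod m)+1}$, not from $e_i$. This is harmless only because Claim~\ref{clm:instance_property1} gives all $e_i$ at once. We also need that $y'_{i',k}$ has a unique positive occurrence, which we check directly from $\varphi_2$ and $\varphi_4$, where $y'$ appears positively only in $\varphi_2$.
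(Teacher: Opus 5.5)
Your proof is correct and follows the paper's argument. You take $S$ to be the true set of the locally minimal assignment, use Claim~\ref{clm:instance_property1} and Claim~\ref{clm:y_gadget_property}(d) to get some $y'$ in $S$, pull back through its unique positive occurrence to some $q_{i,j}\in S$, apply Claim~\ref{clm:instance_property2} for the $1$-witness, and read the $0$-witness off $\varphi_1$. Your definition of $\alpha$ via $\varphi_{3,j}\subseteq S$ is equivalent to the paper's $d_j\in S$, and your direct $0$-witness argument and explicit cyclic index $i'=(i\bmod m)+1$ are slightly cleaner than the paper's write-up.
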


\begin{proof}
    Let $\alpha$ be a locally minimal non-zero satisfying assignment of $\varphi_\phi$ and let $S$ be a maximal self-implicating set corresponding to $\alpha$.
    Note that $S$ is the set of variables assigned the value 1 in $\alpha$.
    We now construct a satisfying assignment $\alpha'$ of $\phi$ from $\alpha$ by setting $\alpha(x_j)$ to 1 if $d_j\in S$ and $\alpha(x_j)$ to 0 if $d_j\notin S$ for each $j\ (1\le j\le n)$, and show that $\alpha'$ is an NAE-satisfying assignment.
    From Claim~\ref{clm:instance_property1}, $S$ contains all $e_i\in E_Y\ (1\le i\le m)$ and all $p_{i,j}\in P$.
    This means that $S$ contains at least one $y'_{i,j'}\ (1\le j' \le 3)$ for each $i$ from the property~(d) in Claim~\ref{clm:y_gadget_property}.
    Moreover, for each $i$, $S$ contains at least one $q_{i,j}$ to imply all $y'_{i,j'}\in S$ by variables in $S\setminus\{y'_{i,j'}\}$ since $S$ is a maximal self-implicating set and each $\varphi_{2,i}$ has the clauses $(\overline{q_{i,j}} \vee y_{(i\bmod m)+1,1}'),(\overline{q_{i,k}} \vee y_{(i\bmod m)+1, 2}')$, and $ (\overline{q_{i,\ell}} \vee y_{(i\bmod m)+1, 3}')$.
    From Claim~\ref{clm:instance_property2}, for each $j$ such that $q_{i,j}\in S$, $S$ contains all variables appearing in $\varphi_{3,j}$.
    This means that $i$-th clause $C_i$ of $\phi$, which has $x_{j,a}, x_{k,b}$ and $x_{\ell,c}$, has at least one literal assigned the value 1.
    If $S$ contains all $q_{i,j}$ for some $i$, then $S$ contains $x_{j,a}, x_{k,b}$ and $x_{\ell,c}$ to imply all $q_{i,j}$ from the clauses $(\overline{p_{i,j}}\vee \overline{x_{j,a}}\vee q_{i,j}), (\overline{p_{i,k}}\vee \overline{x_{k,b}}\vee q_{i,k})$, and $(\overline{p_{i,\ell}}\vee \overline{x_{\ell,c}}\vee q_{i,\ell})$ of $\varphi_{2,i}$.
    It means that the value of $x_{j,a}, x_{k,b}$ and $x_{\ell,c}$ is 1, and the clause $(\overline{x_{j,a}}\vee\overline{x_{k,b}}\vee\overline{x_{\ell, c}})$ of $\varphi_1$ is falsified.
    Thus, at least one $x_{j,a}$ must be excluded from $S$, and then for each $i$, at least one $q_{i,j}$ also must be excluded from $S$.
    This means that for each $i$, $i$-th clause $C_i$ of $\phi$ has at least one literal assigned the value 0.
    Therefore, for each $j$ such that $q_{i,j}\notin S$, $S$ does not contain all variables appearing in $\varphi_{3,j}$ from Claim~\ref{clm:instance_property2}.
    For each $j$, we now set $\alpha'(x_j)=1$ if $d_j\in S$, and set $\alpha'(x_j)=0$ if $d_j\notin S$.
    Then, $\alpha'$ is an NAE-satisfying assignment of $\phi$ since each clause in $\phi$ has at least one literal assigned the value 0 and at least one literal assigned the value 1 in $\alpha'$.
    This completes the proof.
\end{proof}

We are ready to prove the following theorem.

\begin{theorem}\label{thm:3-horn-3_npc}
    {\sc DisConn 3-Horn-3} is $\mathsf{NP}$-complete.
\end{theorem}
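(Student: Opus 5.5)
Membership in $\mathsf{NP}$ and $\mathsf{NP}$-hardness are shown separately. Hardness uses the reduction $\phi \mapsto \varphi_\phi$ constructed above together with Lemmas~\ref{lem:nae_to_3-horn-3} and~\ref{lem:3-horn-3_to_nae}.

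\emph{Membership.} Let $\varphi$ be a 3-Horn formula with each variable occurring at most three times. First apply unit propagation in polynomial time. If the formula becomes unsatisfiable, the solution graph is empty and hence not disconnected, so we answer no. If the formula is satisfiable, its solution graph equals that of the reduced formula (forced variables are fixed), so we may assume there are no unit clauses. By Corollary~\ref{col:local_minimal_makino}, $G_\varphi$ is disconnected if and only if there is a locally minimal non-zero satisfying assignment $\alpha$. Such an $\alpha$ serves as the certificate. Checking that $\alpha$ satisfies $\varphi$ takes polynomial time, and checking local minimality takes $O(mn)$ time by Lemma~\ref{lem:minimal_check}. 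Hence {\sc DisConn 3-Horn-3} $\in \mathsf{NP}$.

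\emph{Hardness.} We reduce from {\sc Monotone NAE E3-SAT-E4}, which is $\mathsf{NP}$-complete by~\cite{Darmann20}. Given $\phi$ with $n$ variables and $m$ clauses, construct $\varphi_\phi=\varphi_1\wedge\varphi_2\wedge\varphi_3\wedge\varphi_4$. This formula has $O(n+m)$ variables ($O(nm)$ if all of $P,Q$ are counted, but only those actually used matter) and $O(n+m)$ clauses, so the construction takes polynomial time. We then verify that it is a valid instance:
\begin{itemize}
\item every clause has at most three literals and at most one positive literal;
\item each variable occurs at most three times, as already checked above;
\item $\varphi_\phi$ contains no unit clauses, so Corollary~\ref{col:local_minimal_makino} applies to it directly.
\end{itemize}

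\emph{Correctness of the reduction.} By Lemmas~\ref{lem:nae_to_3-horn-3} and~\ref{lem:3-horn-3_to_nae}, $\phi$ has an NAE-satisfying assignment if and only if $\varphi_\phi$ has a locally minimal non-zero satisfying assignment. By Corollary~\ref{col:local_minimal_makino}, the latter holds if and only if $G_{\varphi_\phi}$ is disconnected. Thus $\phi$ is a yes-instance exactly when $\varphi_\phi$ is a yes-instance of {\sc DisConn 3-Horn-3}.

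The step needing the most care is the membership argument in the presence of unit clauses, i.e., justifying that unit propagation preserves disconnectivity. The hardness direction is essentially assembled from the earlier lemmas, so there the only real work is confirming that $\varphi_\phi$ has no unit clauses and respects the occurrence bound.
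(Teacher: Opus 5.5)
Your proposal is correct and follows the paper's proof. Membership uses a locally minimal non-zero satisfying assignment as the certificate; the paper phrases this as a non-zero satisfying assignment of $\Phi_\varphi$, which is the same object. Hardness uses the reduction $\phi\mapsto\varphi_\phi$ together with Lemmas~\ref{lem:nae_to_3-horn-3} and~\ref{lem:3-horn-3_to_nae}. Your unit-propagation step and your check that $\varphi_\phi$ has no unit clauses are slightly more careful than the paper. The paper's membership argument is stated only for $\varphi_\phi$ rather than an arbitrary instance, and it tacitly assumes the no-unit-clause hypothesis of Lemma~\ref{lem:makino_formula}.
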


\begin{proof}
Let $\phi$ be an instance of {\sc Monotone NAE E3-SAT-E4}. 
Let $\varphi_\phi$ be a 3-Horn formula constructed from $\phi$, and $\Phi_{\varphi_\phi}$ be the formula constructed from $\varphi_\phi$. 
By Lemma~\ref{lem:makino_formula}, if $\Phi_{\varphi_\phi}$ has a non-zero satisfying assignment, then the solution graph $G_{\varphi_\phi}$ is disconnected; i.e., a non-zero satisfying assignment of $\Phi_{\varphi_\phi}$ is a certificate of disconnection for $G_{\varphi_\phi}$.
Thus, {\sc DisConn 3-Horn-4} belongs to $\mathsf{NP}$. 
Lemma~\ref{lem:nae_to_3-horn-3}, Lemma~\ref{lem:3-horn-3_to_nae}, and Lemma~\ref{lem:makino_formula} imply that {\sc DisConn 3-Horn-3} is $\mathsf{NP}$-hard. 
Therefore, {\sc DisConn 3-Horn-3} is $\mathsf{NP}$-complete.
\end{proof}

The $\mathsf{coNP}$-completeness of {\sc Conn 3-Horn-3} immediately follows from Theorem~\ref{thm:3-horn-3_npc}.
We then show the following Theorem.

\begin{theorem}\label{thm:3-horn-e3_npc}
    {\sc Conn 3-Horn-E3} is $\mathsf{coNP}$-complete.
\end{theorem}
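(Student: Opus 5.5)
Membership in $\mathsf{coNP}$ is inherited from {\sc Conn 3-Horn}: by Lemma~\ref{lem:makino_formula}, a non-zero satisfying assignment of $\Phi_\varphi$ certifies disconnection. For hardness, I reduce from {\sc Conn 3-Horn-3}, which is $\mathsf{coNP}$-complete by Theorem~\ref{thm:3-horn-3_npc}. Given $\varphi_\phi$ in which every variable occurs at most three times, I pad each under-occurring variable to exactly three occurrences with gadgets that leave unchanged both the existence of a non-empty maximal self-implicating set containing no restraint set (Corollary~\ref{col:schwerd}) and the existence of a locally minimal non-zero satisfying assignment (Corollary~\ref{col:local_minimal_makino}).

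The padding gadget is a fresh restraint-type component hanging off a variable. For a variable $x$ with one missing occurrence, I add a clause $(\overline{x}\vee\overline{z})$ with a fresh variable $z$, and then give $z$ two further occurrences in a small closed gadget on fresh variables in which every variable occurs exactly three times.

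\begin{itemize}
\item The gadget must admit the all-zero assignment, so every clause needs a negative literal.
\item No non-empty self-implicating set may live inside the gadget, so the fresh variables are never forced true. For example, the fresh variables occur only negatively, in restraint clauses such as $(\overline{z}\vee\overline{w_1})$ and $(\overline{w_1}\vee\overline{w_2}\vee\overline{w_3})$, arranged so that each fresh variable has exactly three occurrences. A concrete choice is a small 3-regular restraint structure like the complete graph $K_4$ on the fresh variables, with one edge removed to attach $z$.
\item Purely negative clauses on fresh variables cannot imply anything. Fresh variables therefore belong to no maximal self-implicating set, and every such set $S$ of $\varphi_\phi$ is also one for the padded formula.
\end{itemize}

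The danger is the new clause $(\overline{x}\vee\overline{z})$. It does not contain $S$ as a restraint set unless $z\in S$, but the definition of "containing no restraint set" must be read as "no restraint clause has all its variables in $S$". Since $z\notin S$, such a clause is harmless. I therefore show two things.
\begin{itemize}
\item (i) Maximal self-implicating sets of the padded formula are exactly those of $\varphi_\phi$, since fresh variables never appear positively.
\item (ii) A restraint clause lies inside $S$ only if it is an original one.
\end{itemize}
It follows that the padded formula is disconnected iff $\varphi_\phi$ is. Equivalently, at the level of assignments, setting every fresh variable to $0$ preserves satisfiability and local minimality, and conversely, in any locally minimal assignment the fresh variables are $0$, because they are never implied.

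The main obstacle is the bookkeeping that makes every variable occur exactly three times, including the fresh ones, while keeping all clauses of length at most three and never producing a clause such as $x\vee\overline{x}$. I will first count the deficiency of each original variable, which is $0$, $1$ or $2$. A variable with deficiency $2$ gets two such attachments, or one clause $(\overline{x}\vee\overline{z}\vee\overline{z'})$. I will then build fresh restraint components, using $K_4$-type 2-clause structures or 3-uniform cubic hypergraph pieces, whose degree sequences absorb the attachment occurrences exactly. If parity issues arise, I will add several copies so that the degree sums match. Finally, I will check that the construction is polynomial and conclude $\mathsf{coNP}$-completeness.
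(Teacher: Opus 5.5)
Your proposal is correct and is essentially the paper's argument. Both pad each under-occurring variable with fresh variables that occur only negatively, in restraint clauses, so those variables are $0$ in every locally minimal satisfying assignment and connectivity is unchanged. The paper's gadget is $(\overline{v_1}\vee\overline{v_2})(\overline{v_1}\vee\overline{v_2})(\overline{v_1}\vee\overline{v_2}\vee\overline{v})$, and it argues through the negative unit clauses that $\Phi_\varphi$ acquires for $v_1,v_2$. Your gadget, read as removing one edge of $K_4$ and joining $z$ to both endpoints, does the same job without repeating a clause.
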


\begin{proof}
    We prove the $\mathsf{coNP}$-hardness of {\sc Conn 3-Horn-E3} by a polynomial-time reduction from {\sc Conn 3-Horn-3}.
    We first show the construction of a 3-Horn formula with each variable appearing exactly three times from an instance of {\sc Conn 3-Horn-3}.
    Let $\phi$ be an instance of {\sc Conn 3-Horn-3}.
    Then, for every variable $v$ which appears twice, we introduce new variables $v_1, v_2$ and add three clauses $(\overline{v_1}\vee\overline{v_2})(\overline{v_1}\vee\overline{v_2})(\overline{v_1}\vee\overline{v_2}\vee\overline{v})$.
    For each variable that appears once, we repeat the same operation until each variable appears three times.
    Thus, we can obtain the formula $\varphi$, which is an instance of {\sc Conn 3-Horn-E3}.
    Then, we transform $\phi$ and $\varphi$ to $\Phi_\phi$ and $\Phi_\varphi$ for simplicity of proving the correctness of our reduction.
    From the construction, $\Phi_\varphi$ has unit clauses with negative literals of all added variables.
    Therefore, $\Phi_\varphi$ is equal to $\Phi_\phi$ by assigning 0 to all added variables of $\varphi$.
    It means that $\Phi_\phi$ has a non-zero satisfying assignment if and only if $\Phi_\varphi$ has a non-zero satisfying assignment, and by Lemma~\ref{lem:makino_formula} this completes the proof.
\end{proof}

By duality, $\mathsf{coNP}$-completeness of the Boolean connectivity for dual-Horn formulas with the same restrictions follows from the same discussion.

\section{Conclusion}
We proved that {\sc Conn $3$-Horn-E3} is $\mathsf{coNP}$-complete.
Furthermore, we presented a polynomial-time algorithm for {\sc Conn $3$-Horn-2}.
As a result, we established the complexity boundary between $\mathsf{P}$ and $\mathsf{coNP}$-completeness for {\sc Conn $3$-Horn} with bounded variable occurrences.
However, this complexity boundary could not be determined in the case where each clause has exactly three literals.
Although we derived a polynomial-time algorithm for {\sc Conn E$3$-Horn-3}, it remains open whether {\sc Conn E3-Horn-E4} — namely, {\sc Conn $3$-Horn} in which each variable appears exactly four times and each clause has exactly three literals — is $\mathsf{coNP}$-complete.
We conjecture that {\sc Conn E3-Horn-E4} is $\mathsf{coNP}$-complete, since {\sc Monotone NAE 3-SAT} remains $\mathsf{NP}$-complete even when each clause has exactly three literals and each variable appears exactly four times~\cite{Darmann20}, and because the currently known complexity results for {\sc Conn $3$-Horn} with bounded variable occurrences mirror those for {\sc Monotone NAE 3-SAT} under the same restrictions.

\bibliographystyle{plainurl}
\bibliography{ref}

\end{document}